\newcounter{NN}
\newtheorem{proposition}[NN]{Proposition}
\newtheorem{lemma}[NN]{Lemma}
\newcommand{\wh}{\widehat}
\newcommand{\wt}{\widetilde}
\newcommand{\ol}{\overline}
\newcommand{\utl}{\underaccent{\tilde}}
\newcommand{\uht}{\underaccent{\hat}}
\newcommand{\udo}{\underaccent{\bar}}
\newcommand{\vf}{\varphi}
\numberwithin{equation}{section}
\def\P{P}
\def\Q{Q}
\def\w{{Y}}
\def\V{{V}}
\def\W{{W}}
\def\b#1{{\bf #1}}
\begin{document}
\title{On trilinear and quadrilinear equations associated with the lattice Gel'fand-Dikii hierarchy}
\author{P.H.~van der Kamp$^{1,3}$, F.W. Nijhoff$^2$,  D.I. McLaren$^{1}$, G.R.W Quispel$^{1}$}
\date{
$^1$Department of Mathematical and Physical Sciences\\
La Trobe University, Victoria 3086, Australia\\
$^2$School of Mathematics,
University of Leeds, UK\\
[5mm]
$^3$ Corresponding author, P.vanderKamp@LaTrobe.edu.au
}
\maketitle

\begin{abstract}
Introduced in \cite{ZZN}, the trilinear Boussinesq equation is the natural form of the equation for the 
$\tau$-function of the lattice Boussinesq system. In this paper we study various aspects of this equation: 
its highly nontrivial derivation from the bilinear lattice AKP equation under dimensional reduction, a quadrilinear dual lattice equation, conservation laws, and 
periodic reductions leading to higher-dimensional integrable maps and their Laurent property.
Furthermore, we consider a higher Gel'fand-Dikii lattice system, its periodic reductions and Laurent property.
As a special application, from both a trilinear Boussinesq recurrence as well as a higher Gel'fand-Dikii  system of three bilinear recurrences, we establish Somos-like integer sequences. \\ 

\vspace{.2cm} 
\hfill{\textit{To celebrate 120 years from the day of birth of A. Kolmogorov}}

\end{abstract}

\begin{center}{\small
{\bf Keywords:} integrable, duality, trilinear and quadrilinear recurrences, lattice equations, Bousinesq, Gel'fand-Dikii, AKP, conservation law, periodic reduction, Laurent property, integer sequence}
\end{center}

\section{Introduction}
Lattice versions of the Boussinesq (BSQ) equations were first introduced in \cite{NPCQ}. They can be written as discrete equations on a regular 2D lattice with a 9-point stencil, or alternatively 
as multi-component quad equations, cf. \cite{Nij,NijTong}, cf. also \cite{Walker}. Some of the 
3-component systems given in \cite{Walker} were generalised in \cite{Hiet}, leading to some 
interesting parameter extensions, referred to here as 'extended BSQ systems', cf. also \cite{HJN}. 
In \cite{ZZN} the various classes of extended systems from \cite{Hiet} were identified 
within the framework of 'direct linearization' scheme, which allows 
one not only to find the interrelations between the various BSQ systems, but also to construct their Lax pairs 
and explicit solutions. Alongside, a novel trilinear equation was found 
in \cite{ZZN} involving the BSQ $\tau$-function. It is the latter equation that forms the focus 
of attention in the present paper. 

This trilinear Boussinesq equation is defined on a 9-point stencil and adopts the most general form
\begin{equation} \label{eq:TB}
%\big( A\tau_b\uht{\tau}+B\tau_a \utl{\tau}+C\wh{\wt{\tau}}\uht{\utl{\tau}} \big)\tau + D\big(\wh{\utl{\tau}} \tau_a \uht{\tau}+ \uht{\wt{\tau}} \wh{\tau} \utl{\tau} \big)
\big( A\wh{\tau}\uht{\tau}+B\wt{\tau} \utl{\tau}+C\wh{\wt{\tau}}\uht{\utl{\tau}} \big)\tau + D\big(\wh{\utl{\tau}} \wt{\tau} \uht{\tau}+ \wt{\uht{\tau}} \wh{\tau} \utl{\tau} \big)
=0,
\end{equation}
for a function $\tau=\tau(n,m)$ of discrete variables $n,m$, and where 
the $\wt{\phantom{a}}$ and $\wh{\phantom{a}}$ denote elementary shifts on the lattice labelled by these discrete 
variables, i.e. $\wt{\tau}=\tau(n+1,m)$, $\wh{\tau}=\tau(n,m+1)$, while the underaccents denote the corresponding 
reverse shifts, i.e. $\utl{\tau}=\tau(n-1,m)$, $\uht{\tau}=\tau(n,m-1)$. 
In \cite[Equation (71)]{ZZN} the parameters $A,B,C,D$ were identified as 
$A=3p^2-2\alpha_2 p+\alpha_1$, $B=3q^2-2\alpha_2 q+\alpha_1$, $C=-(p-q)^2$ and 
$D=\alpha_2(p+q)-(p^2+pq+q^2)-\alpha_1$, in which case the solution structure was exhibited in the context of the direct linearisation approach, which allows the construction of explicit soliton and inverse scattering type solutions.  Elliptic solutions, for an elliptic parametrisation of  coefficients \cite[Equation (5.29)]{NSZ}, were given recently in \cite{NSZ}. 

In this paper, we will study further the trilinear BSQ \eqref{eq:TB} with general coefficients. 
In section 2, we show how it is obtained as a dimensional reduction from the bilinear AKP equation (also known 
as Hirota-Miwa equation), which unlike the reduction from AKP to the bilinear Hirota form of the lattice Korteweg-de Vries (KdV) equation is quite subtle. However, it provides a window on how to perform higher-order reductions to multilinear equations associated with the lattice Gel'fand-Dikii (GD) hierarchy of \cite{NPCQ}. 
In section 3, we derive conservation laws from three of the four conservation laws for the AKP equation that were given in \cite{D2}. The characteristics of these conservation laws, give rise to a quadrilinear dual equation, cf. \cite{D1,D2}, which corresponds to a higher 
analogue of the discrete time Toda (HADT) equation, which was derived in the framework of orthogonal polynomials in two variables related via an elliptic curve, cf. \cite{SNV}. In section 4, we show how to construct initial value problems for the trilinear BSQ equation, 
based on \cite{IVPs}, and we provide an example of a (2,1)-periodic reduction. We show that the growth of the mapping is quadratic and prove that it possesses the Laurent property, and therefore gives rise to (Somos-like) integer sequences. We also show how the conservation laws for the lattice trilinear BSQ give rise to integrals for the reduction. In section 5 we consider a higher lattice Gel'fand-Dikii system. We show how to construct straight-band initial value problems in any direction, and provide explicitly the (2,1)-periodic reduction which is a system of 3 recurrences (of degrees 4,3,3) and equivalent to a 10-dimensional map. We prove that this map also has the Laurent property and therefore is able to generate coupled integer sequences.

\section{Trilinear BSQ as reduction from lattice AKP}

We will show how to derive the trilinear BSQ equation by dimensional reduction from the bilinear AKP equation,
\begin{equation}\label{eq:bilKP} 
(p-q) \wh{\wt{\tau}}\,\ol{\tau} +(q-r) \wh{\ol{\tau}}\,\wt{\tau}+ 
(r-p) \wt{\ol{\tau}}\, \wh{\tau}=0\  , 
\end{equation}
where $p$, $q$, $r$ are three lattice parameters associated with three different lattice shift operators  
$T_p$, $T_q$, $T_r$ respectively in a three-dimensional lattice on which the function $\tau=\tau(n,m,h)$ is defined, i.e. 
$\wt{\tau}=T_p\tau=\tau(n+1,m,h)$, $\wh{\tau}=T_q\tau=\tau(n,m+1,h)$ and $\ol{\tau}=T_r\tau=\tau(n,m,h+1)$. Equation \eqref{eq:bilKP} actually represents an infinite 
parameter-family of compatible equations, each of which lives on a 3-dimensional octahedral sublattice of the infinite-dimensional lattice comprising independent lattice shifts for
each value of $p,q,r$, cf. e.g. \cite{ABS,GRS}. The 3D consistency is also an important instrument in the reduction procedure to 2-dimensional lattice equations. 

We will be looking at periodic reductions under multiple shifts of the $\tau$-function of \eqref{eq:bilKP}. The first such reduction (the case $N=2$ in the framework of \cite{NPCQ}) is obtained by imposing 
\begin{equation}\label{eq:KdVred} 
T_{-p}\circ T_p\tau =\tau 
\end{equation}
for each lattice direction associated with any chosen parameter $p$. In other words, 
$T_{-p}=T_p^{-1}$ represents the opposite, or reverse lattice shift. 
It can be show that the condition \eqref{eq:KdVred} leads to a reduction to the 
bilinear form of the lattice KdV equation. 
In fact, setting $r=-p$ and $r=-q$ in \eqref{eq:bilKP} and identifying the $\ol{\tau}$ 
with $T_p^{-1}\tau=\underaccent{\tilde}{\tau}$ and $T_q^{-1}\tau=\underaccent{\hat}{\tau}$ 
respectively, which are the reverse shifts to the shifts $\wt{\tau}$ and $\wh{\tau}$, we get 
the two 6-point bilinear lattice equations 
\begin{subequations} \label{eq:KdVtau} 
\begin{align}
& (p-q) \wh{\wt{\tau}}\,\underaccent{\tilde}{\tau} +(q+p) \underaccent{\tilde}{\wh{\tau}}\,\wt{\tau}= 
2p \tau\, \wh{\tau}\  , \\ 
& (q-p) \wh{\wt{\tau}}\,\underaccent{\hat}{\tau} +(q+p) \underaccent{\hat}{\wt{\tau}}\,\wh{\tau}= 
2q \tau\, \wt{\tau}\  ,  
\end{align} \end{subequations} 
which are compatible, cf. \cite{HJN,NRGO}, and constitute the bilinear Hirota form of the 
lattice KdV equation. 

In contrast to the reduction from lattice AKP to lattice KdV,  the reduction from lattice AKP to the 
lattice BSQ equation  (the case $N=3$ in the framework of \cite{NPCQ}) is much more 
subtle, and is obtained by imposing instead of \eqref{eq:KdVred} the following condition: 
\begin{equation}\label{eq:BSQred} 
T_{\omega^2 p}\circ T_{\omega p}\circ T_p\tau =\tau\ ,  
\end{equation}
(and similarly for the $q$-shifts) where $\omega=\exp(2\pi i/3)$ is the cube root of unity, and which implies that there is a 
three-fold reversion of each lattice shift. We will show that \eqref{eq:BSQred} leads to a 
reduction to the trilinear BSQ lattice equation of \cite{ZZN}. 

The way to do the analysis is by first concentrating on the $p$-shifts and setting in \eqref{eq:bilKP} $r=\omega p$ and $r=\omega^2 p$ 
respectively (obviously, a similar analysis can subsequently be done for the $q$-shift 
leading to complementary conditions on the reduction). Thus, let us identify the shift $T_{\omega p}\tau=: \ol{\tau}$, which implies that 
$T_{\omega^2 p}\tau=\underaccent{\tilde}{\underaccent{\bar}{\tau}}$. 
This leads to the following set of equations 
\begin{subequations}\label{eq:BSQreds}\begin{align}
(p-q) \wh{\wt{\tau}} \ol{\tau} +(q-\omega p) \wh{\ol{\tau}} \wt{\tau} 
+(\omega -1)p \wt{\ol{\tau}} \wh{\tau}&=0 , \label{eq:BSQredsa}\\ 
(p-q) \wh{\wt{\ol{\tau}}} \underaccent{\tilde}{\tau} +(q-\omega^2 p) \underaccent{\tilde}{\wh{\tau}} 
\wt{\ol{\tau}}+(\omega^2 -1)p \tau \wh{\ol{\tau}}&=0. \label{eq:BSQredsb}
\end{align} \end{subequations} 
The aim is now to eliminate the 'alien shift' $\ol{\tau}=:\vf$ from this system of equations. This can be 
achieved as follows. 
First, let us rewrite \eqref{eq:BSQreds} as a linear system for $\vf$: 
\begin{subequations}
\begin{align}
& A\vf+B\wh{\vf}+C\wt{\vf}=0 \ , \label{eq:ABC} \\ 
& D\wh{\wt{\vf}}+E\wh{\vf}+F\wt{\vf}=0\  , \label{eq:DEF}
\end{align}
\end{subequations}
in which the coefficients are given by\footnote{These coefficients are generalised to the elliptic case in  
\cite{NSZ}.} 
\begin{align}\label{eq:ABCDEF}
&A=(p-q)\wh{\wt{\tau}}\ , \quad B=(q-\omega p)\wt{\tau}\ , \quad C=(\omega-1)p\wh{\tau}\ , \nonumber \\ 
&D=(p-q)\utl{\tau}\ , \quad E=(\omega^2-1)p\tau\ , \quad F=(q-\omega^2 p)\wh{\utl{\tau}}\ , 
\end{align}
and shift \eqref{eq:ABC} and \eqref{eq:DEF} in the $\wh{\phantom{a}}$ and 
$\wt{\phantom{a}}$ directions, while back-substituting $\wh{\wt{\vf}}$. This will give the relations 
\begin{subequations}
\begin{align}
& \left( \wh{A}-\frac{\wh{C}E}{D}\right) \wh{\vf}+\wh{B}\wh{\wh{\vf}}= \frac{\wh{C}F}{D}\wt{\vf}\ , \label{eq:hathat} \\ 
& \left( \wt{A}-\frac{\wt{B}F}{D}\right) \wt{\vf}+\wt{C}\wt{\wt{\vf}}= \frac{\wt{B}E}{D}\wh{\vf}\ , \label{eq:tiltil} 
\end{align}
\end{subequations}
respectively. Next, we need to get rid of the double-shifted objects $\wt{\wt{\vf}}$ and 
$\wh{\wh{\vf}}$. This can be done by applying a $\wh{\phantom{a}}$ shift on \eqref{eq:tiltil} 
and use \eqref{eq:DEF} to rewrite $\wh{\wt{\wt{\vf}}}$ , and subsequently back-substituting
$\wh{\wh{\vf}}$ and $\wt{\wt{\vf}}$ which are obtained from \eqref{eq:hathat} and 
\eqref{eq:tiltil} respectively. This leads to the 
following complicated relation
\[%begin{align*} &
\left( \wh{\wt{A}} - \frac{\wh{\wt{B}}\wh{F}}{\wh{D}}-\frac{\wh{\wt{C}}\wt{E}}{\wt{D}}\right)
\frac{E\wh{\vf}+F\wt{\vf}}{D} 
+ \frac{\wh{\wt{C}}\wt{F}}{\wt{C}\wt{D}} \left[\frac{\wt{B}E}{D}\wh{\vf} 
-\left( \wt{A}-\frac{\wt{B}F}{D}\right)\wt{\vf}\right]% \\&
+ \frac{\wh{\wt{B}}\wh{E}}{\wh{B}\wh{D}} \left[\frac{\wh{C}F}{D}\wt{\vf} 
-\left( \wh{A}-\frac{\wh{C}E}{D}\right)\wh{\vf} \right]=0,  
\]%end{align*} 
which only involves $\wh{\vf}$ and $\wt{\vf}$. Assuming that these functions are independent, 
we can split the above relation into two, namely 
%\begin{align*}
%&
\[
\left( \wh{\wt{A}} - \frac{\wh{\wt{B}}\wh{F}}{\wh{D}}-\frac{\wh{\wt{C}}\wt{E}}{\wt{D}}\right)\frac{E}{D} 
+ \frac{\wh{\wt{C}}\wt{F}}{\wt{C}\wt{D}}\frac{\wt{B}E}{D}= 
\frac{\wh{\wt{B}}\wh{E}}{\wh{B}\wh{D}} \left( \wh{A}-\frac{\wh{C}E}{D}\right), %\\ &
\quad
\left( \wh{\wt{A}} - \frac{\wh{\wt{B}}\wh{F}}{\wh{D}}-\frac{\wh{\wt{C}}\wt{F}}{\wt{D}}\right)\frac{F}{D} 
+ \frac{\wh{\wt{B}}\wh{E}}{\wh{B}\wh{D}}\frac{\wh{C}F}{D}= 
\frac{\wh{\wt{C}}\wt{F}}{\wt{C}\wt{D}} \left( \wt{A}-\frac{\wt{B}F}{D}\right),%\\ 
%\end{align*} 
\]
which can be simplified to the following two relations: 
%\begin{subequations}\label{eq:fundrels}\begin{align}
%&
\begin{equation} \label{eq:fundrela}
\frac{\wh{\wt{C}}\wt{F}}{\wt{C}\wt{D}}\frac{\wt{A}}{F}= \frac{\wh{\wt{B}}\wh{E}}{\wh{B}\wh{D}} 
\frac{\wh{A}}{E}\  ,% \label{eq:fundrela} \\ 
%&
\quad \wh{\wt{A}} - \frac{\wh{\wt{B}}\wh{F}}{\wh{D}}-\frac{\wh{\wt{C}}\wt{E}}{\wt{D}}
+\frac{\wh{\wt{B}}\wh{E}\wh{C}}{\wh{B}\wh{D}} = \frac{\wh{\wt{C}}\wt{F}}{\wt{C}\wt{D}}
\left( \frac{\wt{A}D}{F}-\wt{B}\right). %\label{eq:fundrelb} 
%\end{align}\end{subequations} 
\end{equation}
The first of these relations is trivially satisfied upon inserting the explicit coefficients 
\eqref{eq:ABCDEF}, whilst the second relation yields 
\begin{align}
&(p-q)^2 \wh{\wh{\wt{\wt{\tau}}}}-(p^2+pq+q^2) 
\frac{\wh{\wt{\wt{\tau}}}\wh{\wh{\utl{\tau}}}}{\wh{\utl{\tau}}} 
-3p^2\frac{\wh{\wh{\wt{\tau}}}\,\wt{\tau}}{\tau} %\nonumber \\ 
%&
+3p^2 \frac{ \wh{\wt{\wt{\tau}}}\wh{\wh{\tau}} \wh{\tau}}{\wh{\utl{\tau}} \wh{\wt{\tau}}} 
=(p-q)^2 \frac{\wh{\wh{\wt{\tau}}} \wh{\tau} \wh{\wt{\wt{\tau}}}\utl{\tau}}{\wh{\wt{\tau}} \wh{\utl{\tau}}\tau} -(p^2+pq+q^2) \frac{\wt{\wt{\tau}} \wh{\tau} \wh{\wh{\wt{\tau}}}}{\tau \wh{\wt{\tau}}}\  . \label{eq:midrel} 
\end{align} 
Setting by definition
\[ \Gamma:= (p-q)^2 \wh{\wt{\tau}}\tau\uht{\utl{\tau}} + 
(p^2+pq+q^2) \left(\wh{\utl{\tau}} \wt{\tau}\uht{\tau}+ \wt{\uht{\tau}}\wh{\tau}\utl{\tau} \right)
-3p^2 \wh{\tau}\tau\uht{\tau} -3q^2 \wt{\tau}\tau \utl{\tau}\ , \]
we can rewrite \eqref{eq:midrel} simply as 
\[ \Gamma=\frac{\wt{\tau}}{\utl{\utl{\tau}}} \utl{\Gamma}\  , \]
which can be integrated as 
\begin{align}\label{eq:Gamma} 
\Gamma= \gamma \wt{\tau}\tau\utl{\tau}\  , \end{align} 
where $\gamma$ is independent of the $\wt{\phantom{a}}$ shift, i.e., $\wt{\gamma}=\gamma$. 
Recalling that the entire analysis so far was only taking into account the relations 
\eqref{eq:BSQreds} where we chose $r=\omega p$ and $r=\omega^2 p$ in \eqref{eq:bilKP}. 
Obviously, we can redo the entire analysis by choosing $r=\omega q$ and $r=\omega^2 q$, 
in which case, in addition to the above form for $\Gamma$ we get 
\[ \Gamma= \gamma' \wh{\tau}\tau\uht{\tau}\  , \]
where $\gamma'$ is independent of the $\wh{\phantom{a}}$ shift. Generically both forms for 
$\Gamma$ can only hold true if $\gamma=\gamma'=0$, and this 
yields $\Gamma=0$ which is the trilinear BSQ equation. 

\subsection*{Generalization to the extended BSQ case}

Instead of $\omega$ being a cube root of unity, the extended BSQ cases studied in \cite{ZZN} 
generalize the dispersion curve for the BSQ systems, i.e. the cusp cubic $k^3=p^3$, to a 
dispersion relation of the form\footnote{ In fact, \eqref{eq:dispcurve} represents 
an unfolding of the singular dispersion curve given by the cusp cubic. As a consequence the 
corresponding solutions of the BSQ system in \cite{HietZhang} possess a smoother 
behaviour than the 'pure BSQ' case given by the cube root of unity case. } 
\begin{align}\label{eq:dispcurve}
\omega^3+\alpha\omega^2+\beta\omega = p^3 +\alpha p^2+\beta p\  , 
\end{align}
with roots $p$, $\omega_1(p)$ and $\omega_2(p)$, and where $\alpha$ and $\beta$ are some fixed 
(constant) parameters. Setting now $r=\omega_1(p)$ and $r=\omega_2(p)$ 
in \eqref{eq:bilKP} the whole analysis above goes through unaltered up to \eqref{eq:midrel}, but where 
the coefficients $A,\ldots,E$ are changed into 
\begin{align}\label{eq:extABCDEF}
&A=(p-q)\wh{\wt{\tau}}\ , \quad B=(q-\omega_1(p))\wt{\tau}\ , \quad C=(\omega_1(p)-p)\wh{\tau}\ , \nonumber \\ 
&D=(p-q)\utl{\tau}\ , \quad E=(\omega_2(p)-p)\tau\ , \quad F=(q-\omega_2(p))\wh{\utl{\tau}}\ .  
\end{align}
This change, while leaving the first of the relations \eqref{eq:fundrela} still trivially satisfied, leads to some changes in 
the evaluation of the second. In fact, from the dispersion relation \eqref{eq:dispcurve} 
we can deduce that the following relations hold between the roots $\omega_1(p)$ 
and $\omega_2(p)$, namely 
\[ \omega_1(p)+\omega_2(p)+p=-\alpha\ , \quad \omega_1(p)\omega_2(p)=p^2+\alpha p+\beta\ , \]
which leads to the following extended form of the $\Gamma$ object: 
\[ \Gamma:= (p-q)^2 \wh{\wt{\tau}}\tau\uht{\utl{\tau}} + 
(p^2+pq+q^2+\alpha(p+q)+\beta) 
\left(\wh{\utl{\tau}} \wt{\tau}\uht{\tau}+ \wt{\uht{\tau}}\wh{\tau}\utl{\tau} \right)
-(3p^2+2\alpha p+\beta) \wh{\tau}\tau\uht{\tau} -(3q^2+2\alpha q+\beta) \wt{\tau}\tau \utl{\tau}.
\]
The extended trilinear BSQ equation $\Gamma=0$ is the same as \eqref{eq:TB} with the given coefficients  \cite[Equation (71)]{ZZN}. In terms of $a=p-q$, $b=q- \omega_1(p)$, and $c=q- \omega_2(p)$ it reads
\[
a^2 \wh{\wt{\tau}}\tau\uht{\utl{\tau}} + 
bc 
\left(\wh{\utl{\tau}} \wt{\tau}\uht{\tau}+ \wt{\uht{\tau}}\wh{\tau}\utl{\tau} \right)
-(a + b)(a + c)\wh{\tau}\tau\uht{\tau}
+(ab + ac - bc) \wt{\tau}\tau \utl{\tau}=0,
\]
although from this form it is  not immediately evident that the equation is symmetric under the interchange of $p$ and $q$ and of the corresponding lattice shifts (this follows from the nature of the roots $\omega_i(p)$). 
We note that in \cite{NSZ} elliptic 
solutions of the trilinear BSQ (and other lattice BSQ equations), were constructed, which involve a parametrisation 
in terms of elliptic functions as coefficients, including an 'elliptic cube root of unity'. The 
latter forms another deformation of the pure cusp cubic BSQ lattice. 

\subsection*{Higher Gel'fand-Dikii multilinear system}

We can extend the methods of this section to obtain higher Gel'fand-Dikii (GD) multilinear lattice 
equations by dimensional reduction. The first next higher system (the case $N=4$ in the framework 
of \cite{NPCQ}) is obtained by a four-fold reduction constraint 
\begin{align}\label{eq:N4red} 
T_{\omega_3(p)}\circ T_{\omega_2(p)}\circ T_{\omega_1(p)}\circ T_p\tau=\tau\ , 
\end{align} 
for each lattice direction labelled by the lattice parameter $p$, and where $\omega_i$, $i=1,2,3$, 
are the roots of the quartic polynomial dispersion curve 
\begin{align}\label{eq:disp4curve}
\omega^4+\alpha\omega^3+\beta\omega^2+\gamma\omega = p^4 +\alpha p^3+\beta p^2+\gamma p\  , 
\end{align}
with constants $\alpha$, $\beta$ and $\gamma$. Following the same procedure as before, setting 
subsequently $r=\omega_1(p)$, $r=\omega_2(p)$ and $r=\omega_3(p)$ in \eqref{eq:bilKP} with 
the various 'alien' lattice shifted objects, it is now convenient to choose 
$\vf:=T_{\omega_1(p)}^{-1}\tau$, $\psi:=T_{\omega_2(p)}\tau$, in which case we get the 
following coupled system 
\begin{subequations}\label{eq:vfpsi}\begin{align}
& (p-q)\tau \wh{\wt{\vf}}+(q-\omega_1(p))\wh{\tau}\wt{\vf}+(\omega_1(p)-p)\wt{\tau}\wh{\vf}=0 \ , 
\label{eq:vfpsia}\\ 
& (p-q)\wh{\wt{\tau}}\psi + (q-\omega_2(p))\wt{\tau}\wh{\psi} +(\omega_2(p)-p) \wh{\tau}\wt{\psi}=0\ , 
\label{eq:vfpsib}\\ 
& (p-q)\utl{\vf}\wh{\wt{\psi}} +(q-\omega_3(p)) \wh{\utl{\vf}}\wt{\psi} + (\omega_3(p)-p)\vf\wh{\psi}=0\ , 
\label{eq:vfpsic}
\end{align} 
\end{subequations} 
where the first two equations are linear, in $\vf$ and $\psi$ respectively, but where the 
last equation couples the other ones. Because of the nonlinearity of \eqref{eq:vfpsic}, the analysis
is distinctly more complicated in this case, however, we note that the two equations 
\eqref{eq:vfpsib} and \eqref{eq:vfpsic} exhibit the same structure as the system 
comprising \eqref{eq:ABC} and \eqref{eq:DEF} of the trilinear case, but now for the function $\psi$. 
Thus, we can use the exact same procedure as in the trilinear case to eliminate $\psi$, i.e. 
use the equations \eqref{eq:fundrela}, where now the coefficients $A,\ldots,F$ are identified as 
\begin{align}\label{eq:N4ABCDEF}
&A=(p-q)\wh{\wt{\tau}}\ , \quad B=(q-\omega_2(p))\wt{\tau}\ , \quad C=(\omega_2(p)-p)\wh{\tau}\ , \nonumber \\ 
&D=(p-q)\utl{\vf}\ , \quad E=(\omega_3(p)-p)\vf\ , \quad F=(q-\omega_3(p))\wh{\utl{\vf}}\ .  
\end{align}
Once again, the first relation of \eqref{eq:fundrela} is trivially satisfied with these coefficients, while 
the second gives rise to the a bilinear equation in both $\tau$ and $\vf$, namely 
\begin{align} 
& (p-q)^2 \left(\wh{\wh{\wt{\wt{\tau}}}}\,\wh{\wt{\tau}}\,\vf\,\wh{\utl{\vf}} - 
\wh{\wh{\wt{\tau}}}\,\wh{\wt{\wt{\tau}}}\,\utl{\vf}\wh{\vf}\right) 
 +(p-\omega_3(p))(p-\omega_2(p)) 
\left(\wh{\wt{\wt{\tau}}}\, \wh{\wh{\tau}}\,\wh{\vf}\,\vf - 
\wh{\wh{\wt{\tau}}}\,\wh{\wt{\tau}}\,\wt{\vf}\,\wh{\utl{\vf}} \right)\ \nonumber \\ 
&+ (q - \omega_3(p))(q - \omega_2(p))
\left( \wt{\wt{\tau}}\,\wh{\wh{\wt{\tau}}}\,\wh{\vf}\,\wh{\utl{\vf}} 
- \wh{\wt{\wt{\tau}}}\,\wh{\wt{\tau}}\,\vf\,\wh{\wh{\utl{\vf}}} \right)=0\ .    \label{eq:vfvf} 
\end{align} 
Thus we get a coupled system for $\vf$ comprising the linear equation \eqref{eq:vfpsia} 
together with the bilinear equation \eqref{eq:vfvf}, from which we would still like to 
eliminate $\vf$. However, we will refrain from that particular computation for the present paper\footnote{It is worth mentioning that there is a companion system with another function 
$\chi=T_{\omega_1(q)}^{-1}\tau$ where the root $\omega_1(q)$ appears  (from the 
dispersion curve \eqref{eq:disp4curve} with $p$ replaced by $q$), and where the roles of 
$p$ and $q$ and the corresponding shifts $\wt{\phantom{a}}$ and $\wh{\phantom{a}}$ are 
reversed.}.  We just mention here, without proof, that from the direct linearisation method 
designed in \cite{ZZN}, and elaborated in \cite{TZZ} for the case $N=4$, the following 10-term 
16-point sextic equation for the $\tau$-, $\vf$- and $\psi$-functions can be derived: 
\begin{align}
& (p-q)^3\, \wt{\tau}\wh{\tau}\left(\tau\uht{\utl{\tau}} \wh{\wh{\wt{\tau}}}\wh{\wt{\wt{\tau}}}  
- \uht{\tau}\utl{\tau}\wh{\wt{\tau}}\wh{\wh{\wt{\wt{\tau}}}} \right)= (4p^3+3\alpha p^2+2\beta p+\gamma) \wh{\tau}\wt{\tau}
\left(\uht{\tau}\tau \wh{\wh{\tau}}\wh{\wt{\wt{\tau}}}-\utl{\tau}\uht{\wt{\tau}}\wh{\wt{\tau}} 
\wh{\wh{\wt{\tau}}}\right)  \nonumber \\ 
& -(4q^3+3\alpha q^2+2\beta q+\gamma) \wh{\tau}\wt{\tau}
\left(\utl{\tau}\tau \wt{\wt{\tau}}\wh{\wh{\wt{\tau}}}-\uht{\tau}\utl{\wh{\tau}}\wh{\wt{\tau}} 
\wh{\wt{\wt{\tau}}}\right)
+ \left[ (p^3+p^2q+pq^2+q^3) +\alpha(p^2+pq+q^2)+\beta(p+q)+\gamma \right] \nonumber \\  
& \quad \times\left[ \utl{\tau}\wh{\tau}\wh{\wh{\wt{\tau}}}
\left(\tau\wt{\wt{\uht{\tau}}}\wh{\wt{\tau}}+\wt{\uht{\tau}}\wh{\tau}\wt{\wt{\tau}} \right) 
- \uht{\tau}\wt{\tau}\wh{\wt{\wt{\tau}}}
\left(\tau\wh{\wh{\utl{\tau}}}\wh{\wt{\tau}}+\wh{\utl{\tau}}\wt{\tau}\wh{\wh{\tau}} \right) 
\right], \label{eq:hexaGD}
\end{align}
which was absent from \cite{TZZ}. Due to its complicated structure we will for now abstain from 
doing further analysis on this equation (including the question of whether this equation can be 
seen as a consequence of a pair of 7-term 12-point quadrilinear equations that one would 
expect to govern the $N=4$ $\tau$-function structure).

\iffalse 
The following relations between the roots of \eqref{eq:disp4curve}  are used in the computation of the 
resulting quadrilinear equation for $\tau$: 
\begin{subequations}\label{eq:root4rels}\begin{align}
& \omega_1(p)+\omega_2(p)+\omega_3(p)=-p-\alpha\ , \\ 
& \omega_1(p)\,\omega_2(p)+ \omega_1(p)\,\omega_3(p)+\omega_2(p)\,\omega_3(p)= p^2+\alpha p+\beta\ , \\ 
& \omega_1(p)\,\omega_2(p)\,\omega_3(p)=-\left(p^3+\alpha p^2 +\beta p+\gamma\right)\   , 
\end{align}
\end{subequations}
whence we have 
\begin{align*}
& (q-\omega_1(p)) (q-\omega_2(p)) (q-\omega_3(p))= (q^3+pq^2+p^2q+p^3) 
+\alpha(q^2+pq+p^2)+\beta(p+q)+\gamma \\ 
& (p-\omega_1(p)) (p-\omega_2(p)) (p-\omega_3(p))= 4p^3 +3\alpha p^2+2\beta p+\gamma\ . 
\end{align*}

\todo[inline]{Can we eliminate the alien function $\vf$ from the system \eqref{eq:vfpsia} and 
\eqref{eq:vfvf} and get a quadrilinear equation for $\tau$? } 
\fi

\section{A dual to the trilinear Boussinesq equation, and a matrix conservation law}

Taking $D=0$ in the trilinear Boussinesq equation \eqref{eq:TB} and dividing by $\tau$ yields
\begin{equation} \label{eq:RAKP}
A\wh{\tau}\uht{\tau}+B\wt{\tau} \utl{\tau}+C\wh{\wt{\tau}}\uht{\utl{\tau}}=0.
\end{equation}
This 2D lattice equation can be obtained from the 3D AKP equation \eqref{eq:bilKP} by the following reduction:
\begin{equation} \label{3t2}
\wt{\tau}\mapsto \uht{\utl{\tau}},\quad
\wh{\tau}\mapsto \wh{\tau},\quad
\ol{\tau}\mapsto \wt{\tau},
\end{equation}
whilst setting $A=r-p$, $B=p-q$, $C=q-r$. Thus, if $\tau(k,l,m)$ satisfies \eqref{eq:bilKP} then $\tau(m-k,l-k)$ satisfies \eqref{eq:RAKP}.

In \cite{D2}, the following dual to the AKP equation \eqref{eq:bilKP} was derived,
\begin{equation} \label{DAKP}
a_{{1}} \left(\utl{\wh{\ol{\tau}}}\wt{\tau}\ol{\wt{\tau}}\wh{\wt{\tau}}
-\ol{\tau}\wh{\tau}\wh{\ol{\tau}}\wt{\wt{\tau}} \right)
+ a_{{2}} \left(\wh{\tau}\wh{\ol{\tau}}\ol{\wt{\uht{\tau}}}\wh{\wt{\tau}}
-\ol{\tau}\wh{\wh{\tau}}\wt{\tau}\ol{\wt{\tau}} \right)
+ a_{{3}} \left(
\ol{\tau}
\wh{\ol{\tau}}
\ol{\wt{\tau}}
\wh{\wt{\udo{\tau}}}
-\ol{\ol{\tau}}
\wh{\tau}
\wt{\tau}
\wh{\wt{\tau}}
\right)
+ a_{{4}} \left( \tau\wh{\ol{\tau}}\ol{\wt{\tau}}\wh{\wt{\tau}}
-\ol{\tau}\wh{\tau}\wt{\tau}\ol{\wh{\wt{\tau}}} \right)=0,
\end{equation}
employing the characteristics (denoted $\Lambda_1,\Lambda_2,\Lambda_3,\Lambda_7$ in \cite{MQ}),
\begin{equation} \label{W}
\W=\left(\dfrac{\hat{\ol{\utl{\tau}}}}{\hat{\ol{\tau}}\hat{\tau}\ol{\tau}}-
\dfrac{\tilde{\tilde{\tau}}}{\tilde{\tau}\tilde{\hat{\tau}}\tilde{\ol{\tau}}},\
\dfrac{\tilde{\ol{\uht{\tau}}}}{\tilde{\ol{\tau}}\tilde{\tau}\ol{\tau}}-
\dfrac{\hat{\hat{\tau}}}{\hat{\tau}\hat{\ol{\tau}}\tilde{\hat{\tau}}},\
\dfrac{\udo{\tilde{\hat{\tau}}}}{\tilde{\hat{\tau}}\tilde{\tau}\hat{\tau}}-
\dfrac{\ol{\ol{\tau}}}{\ol{\tau}\tilde{\ol{\tau}}\hat{\ol{\tau}}},\
\dfrac{\tau}{\tilde{\tau}\hat{\tau}\ol{\tau}}-
\dfrac{\tilde{\hat{\ol{\tau}}}}{\tilde{\hat{\tau}}\hat{\ol{\tau}}\tilde{\ol{\tau}}}
\right).
\end{equation}
Reductions of the dual equation \eqref{DAKP} include Rutishauser's quotient-difference (QD) algorithm, the higher analogue of the discrete time Toda (HADT) equation and its corresponding quotient-quotient-difference (QQD) system, the discrete hungry Lotka-Volterra system, discrete hungry QD, as well as the hungry forms of HADT and QQD. In \cite{D2}, it was conjectured that \eqref{DAKP} has the Laurent property, vanishing algebraic entropy and that it admits N-soliton solutions. The latter was established in \cite{KZQ} by relating it to a 14-point equation found by King and Schief \cite{KS}, which itself is a consequence of the lattice BKP equation (also known as the Miwa equation).
From three of the four characteristics \eqref{W}, we obtain three characteristics for \eqref{eq:TB}. As we shall see, $W_1$ does not provide us with a characteristic for \eqref{eq:TB}. Applying the reduction \eqref{3t2} to \eqref{W}, and dividing by $-\tau$ gives us
\[
X=\left(
\frac{\uht{\uht{\utl{\utl{\tau}}}}}{\tau\utl{\tau}\uht{\tau}\uht{\utl{\tau}}}-\frac{\wh{\wh{\wt{\wt{\tau}}}}}{\tau\wt{\tau}\wh{\tau}\wh{\wt{\tau}}},
\frac{\wh{\wh{\tau}}}{\wh{\tau} \tau \utl{\tau} \wh{\wt{\tau}}}-\frac{\uht{\uht{\tau}}}{\tau \uht{\tau} \wt{\tau} \utl{\uht{\tau}}},
 \frac{\wt{\wt{\tau}}}{\tau \uht{\tau} \wt{\tau} \wh{\wt{\tau}}}-\frac{\utl{\utl{\tau}}}{\wh{\tau} \tau \utl{\uht{\tau}} \utl{\tau}},
\frac{1}{\uht{\tau} \utl{\tau} \wh{\wt{\tau}}}-\frac{1}{\wh{\tau} \wt{\tau} \utl{\uht{\tau}}} 
\right),
\]
which are four characteristics for equation \eqref{eq:TB} when $D=0$, i.e. $
\tau(A\wh{\tau}\uht{\tau}+B\wt{\tau} \utl{\tau}+C\wh{\wt{\tau}}\uht{\utl{\tau}})=0$.
The question is whether they are also characteristics for the remaining term of \eqref{eq:TB}, $T=D\left(\wh{\utl{\tau}} \wt{\tau} \uht{\tau}+ \wt{\uht{\tau}} \wh{\tau} \utl{\tau}\right)$. One can verify whether an expression $Z[\tau]$ is a divergence, by checking whether it is in kernel of the discrete Euler operator, i.e. whether
\[
E(Z)=\sum_\sigma \sigma^{-1}\left(\frac{\partial Z[\tau]}{\partial \sigma(\tau)}\right)=0,
\]
where the sum is over all applicable shifts on the lattice, cf. \cite{HM}. Surprisingly, $E(TX_i)=0$ for $i=2,3,4$. Thus, by taking a linear combination, and multiplying by the common denominator we obtain a quadrilinear dual equation,
\begin{equation} \label{DTB}
x(\wh{\tau} \utl{\tau} \utl{\uht{\tau}} \wt{\wt{\tau}}-\uht{\tau} \wt{\tau} \wh{\wt{\tau}} \utl{\utl{\tau}})+y( \uht{\tau} \wt{\tau} \utl{\uht{\tau}} \wh{\wh{\tau}}-\wh{\wt{\tau}} \wh{\tau} \utl{\tau} \uht{\uht{\tau}}) +z( \tau \wt{\tau} \wh{\tau} \utl{\uht{\tau}}-\tau \uht{\tau} \wh{\wt{\tau}} \utl{\tau})=0.
\end{equation}

Comparing the stencil on which equation \eqref{DTB} is defined, depicted in  Figure \ref{stencil}, with the stencil of the HADT equation, cf. \cite[Figure 1]{SNV}, it is clear which transformation on the independent variables could relate the two equations. Indeed, if $\sigma(l,k)$ satisfies the HADT equation
\[
\uht{\uht{\wt{\sigma}}} \utl{\hat{\sigma}} \left(\uht{\sigma} \hat{\hat{\sigma}}-\sigma \hat{\sigma}\right)
 = 
\uht{\sigma} \utl{\hat{\hat{\sigma}}} \left(\utl{\hat{\sigma}} \uht{\uht{\wt{\wt{\sigma}}}}-\sigma \uht{\wt{\sigma}}\right)+\hat{\sigma} \uht{\wt{\sigma}} \left(-\utl{\utl{\hat{\hat{\sigma}}}} \uht{\uht{\wt{\sigma}}}+\utl{\hat{\hat{\sigma}}} \uht{\uht{\sigma}}\right),
\]
then $\tau(l,k)=\sigma(l,-l-k)$ satisfies equation \eqref{DTB}, with $x=y=1$, $z=-1$.
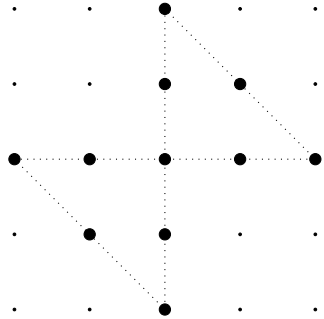
\begin{figure}[h]
\begin{center}
\begin{tikzpicture}[scale=1]
\foreach \x in {1,...,5}{
      \foreach \y in {1,...,5}{
        \node[draw,circle,inner sep=0pt,fill] at (\x,\y) {};
            % Places a dot at those points
      }
    }
\foreach \x in {(1,3),(2,2),(2,3),(3,1),(3,2),(3,3),(3,4),(3,5),(4,3),(4,4),(5,3)}{
    \node[draw,circle,inner sep=1.5pt,fill] at \x {};
    }
\draw[dotted] (1,3)--(5,3)--(3,5)--(3,1)--(1,3);
\end{tikzpicture}
\end{center}
\caption{\label{stencil} The stencil of the dual  equation (\ref{DTB}).}
\end{figure}

A matrix conservation law is given by
\begin{equation} \label{CONS}
\tilde{\P}-\P+\hat{\Q}-\Q=\V^T U,
\end{equation}
where $\P,\Q$ are the $4\times3$ matrices
\[
\P=
\begin{pmatrix}
\dfrac{\wh{\utl{\tau}} \wt{\tau}}{\tau \wh{\tau}}+\dfrac{\utl{\utl{\wh{\tau}}} \tau}{\utl{\tau} \wh{\utl{\tau}}} & 0 & 0
\\[2mm]
 \dfrac{\wt{\tau} \utl{\utl{\tau}}}{\wh{\tau} \utl{\uht{\tau}}} & \dfrac{\tau \wh{\wh{\utl{\tau}}}}{\wh{\tau} \wh{\utl{\tau}}} & \dfrac{\utl{\tau} \tau}{\wh{\tau} \utl{\uht{\tau}}}
\\[2mm]
 \dfrac{\utl{\utl{\uht{\tau}}} \wt{\tau}}{\utl{\uht{\tau}} \tau} & -\dfrac{\utl{\uht{\tau}} \wh{\wh{\tau}}}{\wh{\tau} \utl{\tau}} & -\dfrac{\utl{\uht{\tau}} \tau}{\uht{\tau} \utl{\tau}}
\\[2mm]
 \dfrac{\wt{\tau} \utl{\utl{\wh{\tau}}}}{\utl{\tau} \wh{\tau}}+\dfrac{\uht{\tau} \utl{\utl{\tau}} \wt{\tau} \wh{\utl{\tau}}}{\wh{\tau} \tau \utl{\uht{\tau}} \utl{\tau}} & \dfrac{\uht{\tau} \wh{\wh{\utl{\tau}}}}{\wh{\tau} \utl{\tau}} & \dfrac{\uht{\tau} \wh{\utl{\tau}}}{\wh{\tau} \utl{\uht{\tau}}}
\end{pmatrix},
\quad
\Q=\begin{pmatrix}
\dfrac{\utl{\utl{\tau}} \uht{\tau}}{\utl{\uht{\tau}} \utl{\tau}} & \dfrac{\wh{\tau} \uht{\uht{\tau}}}{\wt{\tau} \utl{\uht{\tau}}} & \dfrac{\tau \uht{\tau}}{\wt{\tau} \utl{\uht{\tau}}}
\\[2mm]
 0 & \dfrac{\uht{\tau} \wh{\utl{\tau}}}{\tau \utl{\tau}}+\dfrac{\utl{\tau} \uht{\uht{\tau}}}{\uht{\tau} \utl{\uht{\tau}}} & 0
\\[2mm]
 -\dfrac{\utl{\utl{\uht{\tau}}} \wt{\tau}}{\utl{\uht{\tau}} \tau} & \dfrac{\wh{\wt{\tau}} \uht{\uht{\tau}}}{\uht{\tau} \wt{\tau}} & -\dfrac{\uht{\tau} \wt{\tau}}{\wt{\uht{\tau}} \tau}
\\[2mm]
 \dfrac{\utl{\utl{\tau}} \wt{\uht{\tau}}}{\tau \utl{\uht{\tau}}} & \dfrac{\uht{\uht{\tau}} \wh{\utl{\tau}}}{\tau \utl{\uht{\tau}}}+\dfrac{\wh{\tau} \uht{\uht{\tau}} \utl{\tau} \wt{\uht{\tau}}}{\tau \uht{\tau} \wt{\tau} \utl{\uht{\tau}}} & \dfrac{\utl{\tau} \wt{\uht{\tau}}}{\wt{\tau} \utl{\uht{\tau}}}
\end{pmatrix},
\]
$U=(X_2,X_3,X_4)$ and $\V^T$ denotes the transpose of
$
\V=\left(
\wh{\tau} \tau \uht{\tau},
\wt{\tau} \tau \utl{\tau},
\wh{\wt{\tau}} \tau \utl{\uht{\tau}},
\wt{\tau} \uht{\tau} \wh{\utl{\tau}}+\wh{\tau} \utl{\tau} \wt{\uht{\tau}}
\right).
$
Denoting two vectors of coefficients by $S=\left( A , B , C, D \right)$ and $\w=\left(y,x,z \right)$, we have that
$S\V^T=0$ represents the trilinear Boussinesq equation \eqref{eq:TB} and the equation $U\w^T=0$ is equivalent to dual \eqref{DTB}. Hence, pre-multiplying \eqref{CONS} with $S$  gives three conservation laws for the trilinear Boussinesq  equation, and post-multiplying \eqref{CONS} with $\w^T$ yields four conservation laws for (a rational version of) equation \eqref{DTB}.

In \cite{D2} it was conjectured that reductions of (\ref{DAKP}), such as equation (\ref{DTB}), possess the Laurent property, and have vanishing algebraic entropy. It seems that this also holds true for the trilinear Boussinesq equation itself! 

\section{Periodic reduction, growth and Laurentness of trilinear BSQ, integrals from conservation laws}
A periodic reduction of a lattice equation is a mapping, whose iterates provide a periodic solution. Geometrically, one rolls up the lattice to form a cylinder, cf. \cite[Figure 1]{Orm}. Algebraically, one chooses a vector $\b{s}\in \mathbb{Z}^2$, compatible with the lattice equation, and imposes the periodicity condition $\tau(l,k)=\tau((l,k)+\b{s})$.

How to pose initial values problems (IVPs, Cauchy problems) for lattice equations, on arbitrary stencils, was described in \cite{IVPs}. We summarise the construction and apply it to the trilinear BSQ.

For a given stencil $S$, one defines the $S$-directions as the directions of the edges of the convex hull of $S$. For each $\b{s}$ whose direction is not an $S$-direction\footnote{If the direction $\b{\hat{s}}$ is an $S$-direction, then one has to augment the IVP with values in an additional direction. Such a problem is called a Goursat problem.} the $\b{s}$-periodic reduction is a mapping of dimension
\[
D(\b{s},\b{d}):=\left|\det\left(\begin{array}{c} \b{s} \\ \b{d} \end{array}\right)\right|,
\]
where $\b{d}$ is the difference between two points $p_1\in l_1,p_2\in l_2$ on two lines $l_1,l_2$ with direction $\b{\hat{s}}$ which squeeze the stencil. The mapping corresponds to a shift on the lattice $\b{a}\rightarrow\b{c}$, where $\b{c}$ is $(0,1)$ or $(1,0)$ or the unique lattice point inside the parallelogram spanned by  $\b{\hat{s}}$ and $(1,0)$ such that $D(\b{s},\b{c})=1$.

For a low-dimensional example we give a formula for its growth and prove that it has the Laurent property. 

\subsection*{(2,1)-reduction of trilinear BSQ}
We squeeze the 9-point square stencil using two lines with direction $\b{s}=(2,1)$, see Figure \ref{S1}. The difference between the points where the lines touch the stencil is $\b{d}=(2,-2)$, or alternatively one could e.g. choose points on the lines such that the difference is $\b{d}=(0,3)$. The dimension of the reduction is \[
D(\b{s},\b{d})=\left|\det\begin{pmatrix} 2 & 1 \\ 2 & -2\end{pmatrix}\right|=6.
\]
We take $\b{c}=(1,0)$ (so that $D(\b{s},\b{c})=1$). Now we start labeling the initial values in steps of $\b{c}$, whilst making horse-jumps (-$\b{s}$) at the right boundary of the stencil, see Figure \ref{S2}.

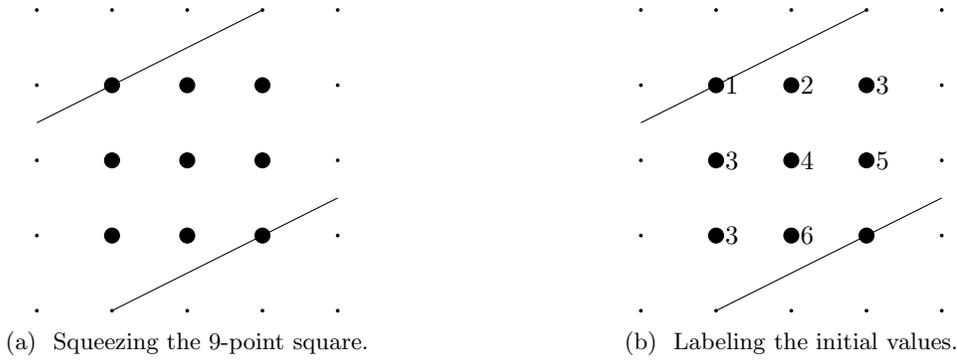
\begin{figure}[h!]
\begin{subfigure}{0.45\textwidth}
\begin{center}
\begin{tikzpicture}[scale=1]
\foreach \x in {1,...,5}{
      \foreach \y in {1,...,5}{
        \node[draw,circle,inner sep=0pt,fill] at (\x,\y) {};
            % Places a dot at those points
      }
    }
\foreach \x in {2,...,4}{
      \foreach \y in {2,...,4}{
        \node[draw,circle,inner sep=2pt,fill] at (\x,\y) {};
            % Places a dot at those points
      }
    }
\draw (1,3.5)--(4,5);
\draw (2,1)--(5,2.5);
\end{tikzpicture}
\caption{\label{S1} Squeezing the 9-point square.}
\end{center}
\end{subfigure}
\begin{subfigure}{.45\textwidth}
\begin{center}
\begin{tikzpicture}[scale=1]
\foreach \x in {1,...,5}{
      \foreach \y in {1,...,5}{
        \node[draw,circle,inner sep=0pt,fill] at (\x,\y) {};
            % Places a dot at those points
      }
    }
\foreach \x in {2,...,4}{
      \foreach \y in {2,...,4}{
        \node[draw,circle,inner sep=2pt,fill] at (\x,\y) {};
            % Places a dot at those points
      }
    }
\draw (1,3.5)--(4,5);
\draw (2,1)--(5,2.5);
\draw (2,4) node[right] {1} 
    (3,4) node[right] {2}
    (4,4) node[right] {3}
    (2,3) node[right] {3}
    (3,3) node[right] {4}
    (4,3) node[right] {5}
    (2,2) node[right] {3}
    (3,2) node[right] {6};
(3,2) node[right] {6};
\end{tikzpicture}
\caption{\label{S2} Labeling the initial values.}
\end{center}
\end{subfigure}
\caption{\label{rtB} The (2,1)-reduction of trilinear Boussinesq gives rise to a six-dimensional mapping.}
\end{figure}

The mapping takes the form
\begin{subequations}  \label{mg}\begin{equation}\label{eq:mga} 
(x_1,x_2,x_3,x_4,x_5,x_6)\mapsto (x_2,x_3,x_4,x_5,x_6,x_7),    
\end{equation}
where $x_7$ is determined by taking $n=1$ in the recursion 
\begin{equation}\label{eq:mgb}
\big( Ax_{n+1}x_{n+5}+(B+C)x_{n+2}x_{n+4}\big)x_{n+3} + D\big(x_nx_{n+4}x_{n+5}+x_{n+1}x_{n+2}x_{n+6} \big)
=0.
\end{equation} \end{subequations} 
As $C,D$ can be absorbed into $A, B$ we take, without loss of generality, $C=0,D=1$. Starting with initial values linear in a variable $z$, e.g. such as $p=[3 - 2z, 15 - 4z, -5z + 1, 5 + 3z, 1 - z, 8 + 3z]$, the $z$-degree of the numerator of the last component of the $n$-th iterate is given by
\[
\frac{1}{4}\left(n^2+3n+1+(-1)^{\lfloor(n+2)/2\rfloor}\right),
\]
at least for the first 16 iterates we computed.

If for all $n$, the denominator of the $n$-th iterate of a map is a product of powers of the initial values, then the map is said to have the Laurent property. For the above map the denominators are monomials in the initial values (but not the first nor the last one). And this can be proven as follows.

\begin{proposition} \label{lp1}
The mapping (\ref{mg}) has the Laurent property.
\end{proposition}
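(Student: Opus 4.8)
The plan is to establish Laurentness by the standard Laurent-phenomenon argument adapted to this particular recurrence: exhibit a finite collection of candidate denominators, show they are a sub-collection of the initial variables, and prove that each one-step iterate, expressed in the ring of Laurent polynomials in the relevant variables, produces a new element with a monomial denominator. Concretely, write $R=\mathbb{Z}[A,B,D]$ and work in the Laurent polynomial ring $R[x_1^{\pm1},\dots,x_6^{\pm1}]$ (after the normalisation $C=0$, $D=1$ one may even take $R=\mathbb{Z}$). From \eqref{eq:mgb} with $n=1$ we have
\[
x_7=-\frac{Ax_2x_6x_4+Bx_3^2x_4+x_1x_5x_6}{x_2x_3}\cdot\frac{1}{x_4}\cdot x_4,
\]
so, being careful with the cross term, $x_7=-\bigl(Ax_2x_4x_6+Bx_3^2x_4+x_1x_5x_6+Dx_1x_2x_?\bigr)/(x_3x_4)$ — i.e. the denominator of $x_7$ is the \emph{monomial} $x_3x_4$ (one checks from the indexing in \eqref{eq:mgb} that the coefficient of $x_7$, namely $x_1x_2$, is cancelled so the only surviving denominator is $x_3x_4$, or a divisor thereof; the empirical observation in the text that neither $x_1$ nor $x_6$ ever appears in a denominator is consistent with this). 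The key point to nail down first is this base case: $x_7\in R[x_1^{\pm1},\dots,x_6^{\pm1}]$ with denominator dividing $x_3x_4$.

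Next I would set up the induction. Suppose $x_{N}$ is a Laurent polynomial in $x_1,\dots,x_6$ with monomial denominator for all relevant earlier indices; to control $x_{n+6}$ via \eqref{eq:mgb} one must divide by $x_{n+1}x_{n+2}$ (its coefficient) after clearing, so a priori new denominators $x_{n+1},x_{n+2}$ appear, which for large $n$ are themselves Laurent polynomials and not variables. The way around this — the heart of every Laurent-property proof — is to show that the potentially-dangerous factors \emph{cancel}: one proves that consecutive iterates are coprime in the Laurent polynomial ring, so that when \eqref{eq:mgb} is solved for $x_{n+6}$, the factor $x_{n+1}x_{n+2}$ in the denominator is absorbed by the numerator. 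Equivalently, following the Fomin–Zelevinsky / Hone template, one verifies that (i) each $x_{n+6}$, regarded modulo $x_{n+1}$, reduces to a Laurent monomial times a unit (so $x_{n+1}\nmid$ the relevant numerator in the wrong way), and symmetrically modulo $x_{n+2}$; and (ii) the three Laurent polynomials $x_{n+6}$, $x_{n+1}$, $x_{n+2}$ are pairwise coprime, which propagates the "monomial denominator" invariant. For (i) I would substitute $x_{n+1}=0$ (resp. $x_{n+2}=0$) into \eqref{eq:mgb} and read off that $x_{n+6}$ specialises to a product/quotient of the surviving variables, using the explicit structure of the three-term-plus-$D$ form; for (ii) one uses that coprimality modulo the earlier cluster variables is preserved under the exchange relation because the specialisations computed in (i) are units (monomials) — a direct check on the exponents.

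The main obstacle will be step (ii): controlling coprimality far down the sequence. Because \eqref{eq:mgb} is genuinely nonlinear of degree $3$ (not a linear or bilinear recurrence), one cannot simply quote a known cluster-algebra exchange pattern; one has to verify by hand that the numerator of $x_{n+6}$ is not divisible by $x_{n+1}$, by $x_{n+2}$, nor by any earlier variable — and that the two "new" denominators introduced at each step never collide with a denominator needed later. In practice I expect this reduces to checking a bounded number of residue computations: evaluate the recurrence modulo each of $x_{n+1},x_{n+2},x_{n+3},x_{n+4}$ in turn and confirm the result is a Laurent \emph{monomial}, which forces coprimality by a gcd argument. This is exactly the kind of finite verification that the authors report doing numerically ("at least for the first 16 iterates") and that a clean induction should close; so the proof will consist of the base case above, the statement of the induction hypothesis (every $x_n$ lies in $R[x_1^{\pm1},\dots,x_6^{\pm1}]$ with monomial denominator and consecutive $x_n$ pairwise coprime), and the inductive step driven by the four residue computations just described. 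Should the pairwise-coprimality bookkeeping prove delicate, the fallback is the caterpillar/lemma of Fomin–Zelevinsky: enlarge the set of "frozen" data so that \eqref{eq:mgb} becomes an instance of a Laurent-phenomenon algebra, for which Laurentness is automatic.
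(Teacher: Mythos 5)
Your proposal is a plan rather than a proof, and two of its load-bearing steps do not hold up. First, the base case is miscomputed: solving \eqref{eq:mgb} with $n=1$, $C=0$, $D=1$ gives
\begin{equation*}
x_7=-\frac{Ax_2x_4x_6+Bx_3x_4x_5+x_1x_5x_6}{x_2x_3},
\end{equation*}
so the denominator is $x_2x_3$ (in the paper's indexing, $q_6=p_1p_2$), not $x_3x_4$ nor a divisor of it; your term $Bx_3^2x_4$ and the unresolved ``$x_?$'' indicate the computation was never completed. Second, and more seriously, the pivotal inductive claim (i) is false for this recurrence: solving \eqref{eq:mgb} for $x_{n+6}$ one divides by $x_{n+1}x_{n+2}$, and specialising $x_{n+1}=0$ leaves the numerator $Bx_{n+2}x_{n+3}x_{n+4}+x_nx_{n+4}x_{n+5}$, while $x_{n+2}=0$ leaves $x_{n+5}\left(Ax_{n+1}x_{n+3}+x_nx_{n+4}\right)$; neither is a Laurent monomial, so the monomial-specialisation (caterpillar-type) mechanism you invoke simply does not apply, and the coprimality propagation you hope to read off ``from the exponents'' never gets started. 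The fallback via Laurent-phenomenon algebras has the same defect: the needed coprimality/irreducibility of the exchange data is exactly what must be verified and is not automatic for this trilinear rule.

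The paper closes the argument by a different, finite route: it computes the iterates $p_i/q_i$, $i=6,\ldots,12$, explicitly, verifies that the denominators $q_i$ are monomials in $p_1,p_2,p_3,p_4$ and that $\gcd(p_6,p_i)=1$ for $i=7,\ldots,12$, and then applies \cite[Theorem 2]{HHVQ}, which converts precisely this bounded amount of data into the Laurent property for all iterates. To repair your argument you would either have to cite such a result (at which point your proof becomes the paper's, with the finite check actually carried out), or formulate and prove the induction yourself with the correct invariant --- monomial denominators together with coprimality of later numerators with the distinguished early numerators --- rather than the monomial-specialisation condition, which fails here. As written, the proposal has a genuine gap.
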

\begin{proof}
For general initial values, $p_0,\ldots,p_5$ we write the first 7 iterates as $p_i/q_i$, with $(p_i,q_i)=1$ (co-prime), $i=6,7,\ldots,12$. The first few numerators are
\begin{align*}
p_6&=- p_{0} p_{4} p_{5}-A p_{1} p_{3} p_{5}-B p_{2} p_{3} p_{4}\\
p_7&=p_{0} p_{1} p_{4} p_{5}^{2}+A  p_{0} p_{2} p_{4}^{2} p_{5}+A  p_{1}^{2} p_{3} p_{5}^{2}+A^{2} p_{1} p_{2} p_{3} p_{4} p_{5}+A B p_{2}^{2} p_{3} p_{4}^{2}\\
p_8&= p_{0}^{2} p_{1} p_{4} p_{5}^{3}+A p_{0}^{2} p_{2} p_{4}^{2} p_{5}^{2}+A p_{0} p_{1}^{2} p_{3} p_{5}^{3}+ \left(A^{2}+2  B +2  C \right) p_{0} p_{1} p_{2} p_{3} p_{4} p_{5}^{2}+A B^{2} p_{2}^{3} p_{3}^{2} p_{4}^{2}\\
&\ \ \ +2 A  B p_{0} p_{2}^{2} p_{3} p_{4}^{2} p_{5}+2 A  B p_{1}^{2} p_{2} p_{3}^{2} p_{5}^{2}+B \left(A^{2}+ B + C \right) p_{1} p_{2}^{2} p_{3}^{2} p_{4} p_{5}\\
p_9&=- p_{0}^{3} p_{1} p_{4}^{2} p_{5}^{4}-A p_{0}^{3} p_{2} p_{4}^{3} p_{5}^{3}-2 A p_{0}^{2} p_{1}^{2} p_{3} p_{4} p_{5}^{4}- \left(2 A^{2}+ B + C \right) p_{0}^{2} p_{1} p_{2} p_{3} p_{4}^{2} p_{5}^{3}-A^{2} p_{0} p_{1}^{3} p_{3}^{2} p_{5}^{4}\\
&\ \ \ -A B p_{0}^{2} p_{2}^{2} p_{3} p_{4}^{3} p_{5}^{2}-A  \left(A^{2}+2 B +2  C \right) p_{0} p_{1}^{2} p_{2} p_{3}^{2} p_{4} p_{5}^{3}-A^{2}  B p_{0} p_{1} p_{2}^{2} p_{3}^{2} p_{4}^{2} p_{5}^{2}+A B^{3} p_{2}^{4} p_{3}^{3} p_{4}^{3}\\
&\ \ \ +A  B^{2} p_{0} p_{2}^{3} p_{3}^{2} p_{4}^{3} p_{5}-A^{2}  B p_{1}^{3} p_{2} p_{3}^{3} p_{5}^{3}+A^{2} B^{2} p_{1} p_{2}^{3} p_{3}^{3} p_{4}^{2} p_{5}\\
&\ \ \vdots 
\end{align*}
and the denominators are
\[
q_6=p_{1} p_{2},\ q_7=p_{1} p_{2}^{2} p_{3},\ q_8= p_{1}^{2}p_{2}^{2}  p_{3}^{2},\ q_9= 
p_{1}^{2}p_{2}^{4}  p_{3}^{2} p_{4},\ q_{10}= 
p_{1}^{3}p_{2}^{5}  p_{3}^{4} p_{4},\ q_{11}= 
p_{1}^{4} p_{2}^{6} p_{3}^{5} p_{4}^{2},\ q_{12}= 
p_{1}^{5} p_{2}^{8} p_{3}^{6}p_{4}^{2}.
\]
We have verified that the $q_i$ are monomials in $p_1,p_2,p_3,p_4$ and that $(p_6,p_i)=1$ for $i=7,\ldots,12$. This establishes the Laurent property, cf. \cite[Theorem 2]{HHVQ}, that all iterates are Laurent polynomials in $p_1,p_2,p_3,p_4$. 
\end{proof}

Hence, if one starts with initial values $p_1=p_2=p_3=p_4=1$, one obtains a polynomial (or integer) sequence. E.g. for $A=-5,B=-3$ and $p_0=4,p_5=2$, we obtain
\begin{align}\label{eq:Somseq} 
& \ldots,1,2,5, 21, 135, 585, 8640, 228825, 2193075, 72444375, 7923227625, 265006991250, 15144850614375,\ldots, \nonumber \\ 
& 
\end{align}
which is an example of an integer sequence obtained from the trilinear recursion equation 
\eqref{eq:mgb}.
As pointed out by M. Somos \cite{Somos}, it also satisfies the bilinear recursion relation 
\begin{equation}
p_{n+4}p_{n-3} = -3p_{n+3}p_{n-2} + 99p_{n+1}p_n. \label{eq:Somrecurs}
\end{equation}
A similar statement holds true for all values of $A,B$, $p_0=a,p_1=p_2=p_3=p_4=1,p_5=b$. With extra initial value $p_6=-b( A + a) - B$, the (polynomial) iterates of the map \eqref{mg} are generated by
\[
p_{n+4}p_{n-3} = B p_{n+3}p_{n-2} + Y p_{n+1}p_n,
\]
where $Y = ab(A^2 + ab + B) + (a + b)(ab + B)A + B^2$. It remains an open question whether 
integer sequences arising from higher periodic reductions of the trilinear lattice equation \eqref{eq:TB} are all of Somos type (i.e., admit a bilinear recurrence). 

\subsection*{Integrals from conservation laws}
The initial values in Figure \ref{S2} are labeled according to the values of a new variable, in terms of the lattice variables $l,m$ the new variable is $n=l-2m$. More generally, if $\b{s}$ is in the first quadrant, and gcd$(s_1,s_2)=1$, then the variable $n(l,m)=s_2l -s_1m$ has the properties that $n((l,m)+k\b{s})=n(l,m)$ and, with $D(\b{s},\b{c})=1$, $n((l,m)+\b{c})=n(l,m)+1$.

A conservation law for a lattice equation
\[
\wt{P}-P+\wh{Q}-Q\equiv 0\qquad \text{(modulo the equation)}
\]
reduces to
\begin{equation} \label{rcl}
P_{n+s_2}-P_n+Q_{n-s_1}-Q_n\equiv 0\qquad  \text{(modulo the reduction)},    
\end{equation}
and this gives rise to an integral. When $\b{s}$ is in the fourth quadrant, $s_1>0,s_2<0$, and gcd$(s_1,s_2)=1$, we take $n(l,m)=|s_2|l + s_1m$, cf. \cite{IVPs}, and then the conservation law reduces to
\begin{equation} \label{rcl2}
P_{n+|s_2|}-P_n+Q_{n+s_1}-Q_n\equiv 0\qquad  \text{(modulo the reduction)}.    
\end{equation}

\begin{lemma} \label{lem}
The expression
$
K_n=\sum_{i=0}^{s_2-1} P_{n+i} - \sum_{j=1}^{s_1} Q_{n-j}
$
does not depend on $n$ if (\ref{rcl}) holds. The expression
$
L_n=\sum_{i=0}^{|s_2|-1} P_{n+i} + \sum_{j=0}^{s_1-1} Q_{n+j}
$
does not depend on $n$ if (\ref{rcl2}) holds.
\end{lemma}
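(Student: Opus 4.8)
The plan is to prove both statements by the same elementary telescoping argument: show that $K_{n+1}=K_n$ assuming \eqref{rcl}, and that $L_{n+1}=L_n$ assuming \eqref{rcl2}, after which constancy in $n$ follows immediately. (Here ``does not depend on $n$'' is understood modulo the reduction, so that $K_n$ and $L_n$ are genuine first integrals of the corresponding periodic map; note also that for $\b{s}$ in the first, resp. fourth, quadrant one has $s_1>0$ and $s_2>0$, resp. $s_2<0$, so the index ranges in the definitions of $K_n$ and $L_n$ are nonempty.)

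First I would compute $K_{n+1}-K_n$. Re-indexing the shifted $P$-sum as $\sum_{i=0}^{s_2-1}P_{n+1+i}=\sum_{i=1}^{s_2}P_{n+i}$, the $P$-part telescopes to $P_{n+s_2}-P_n$. For the $Q$-part, the substitution $j\mapsto j+1$ turns $\sum_{j=1}^{s_1}Q_{n+1-j}$ into $\sum_{j=0}^{s_1-1}Q_{n-j}$, so $-\sum_{j=1}^{s_1}Q_{n+1-j}+\sum_{j=1}^{s_1}Q_{n-j}=Q_{n-s_1}-Q_n$. Adding the two contributions gives
\[
K_{n+1}-K_n=P_{n+s_2}-P_n+Q_{n-s_1}-Q_n,
\]
which is precisely the left-hand side of \eqref{rcl} and hence vanishes modulo the reduction; therefore $K_n$ is constant.

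The second statement is handled in exactly the same way, adjusting the index ranges and signs to those of \eqref{rcl2}. Computing $L_{n+1}-L_n$, the $P$-part $\sum_{i=0}^{|s_2|-1}P_{n+1+i}-\sum_{i=0}^{|s_2|-1}P_{n+i}$ telescopes to $P_{n+|s_2|}-P_n$, and the $Q$-part $\sum_{j=0}^{s_1-1}Q_{n+1+j}-\sum_{j=0}^{s_1-1}Q_{n+j}$ telescopes to $Q_{n+s_1}-Q_n$, so that $L_{n+1}-L_n=P_{n+|s_2|}-P_n+Q_{n+s_1}-Q_n\equiv0$ by \eqref{rcl2}, and $L_n$ is constant.

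There is no genuine obstacle in this argument; the only care required is the bookkeeping of the summation bounds, so that the boundary terms left by the two telescoping sums combine to exactly the four-term expressions appearing in \eqref{rcl} and \eqref{rcl2} — in particular that the ``interior'' terms $P_{n+1},\dots,P_{n+s_2-1}$ (resp. $Q$-terms) cancel in pairs, which is what makes the reduced conservation law produce an $n$-independent quantity rather than merely a quasi-periodic one.
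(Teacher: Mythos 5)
Your proof is correct and is essentially the same telescoping argument as the paper's: the paper rewrites $K_{n+1}$ directly, splitting off the boundary terms $P_{n+s_2}$ and $Q_n$ and replacing them via \eqref{rcl} by $P_n$ and $Q_{n-s_1}$, while you equivalently compute the difference $K_{n+1}-K_n$ and identify it with the left-hand side of \eqref{rcl} (and likewise for $L_n$ with \eqref{rcl2}).
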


\begin{proof}
We have $\begin{aligned}[t]
K_{n+1}&=\sum_{i=0}^{s_2-1} P_{n+i+1} - \sum_{j=1}^{s_1} Q_{n-j+1}=\sum_{i=0}^{s_2-2} P_{n+i+1} - \sum_{j=2}^{s_1} Q_{n-(j-1)} + P_{n+s_2} - Q_{n} \\
&=\sum_{i=1}^{s_2-1} P_{n+i} - \sum_{j=1}^{s_1-1} Q_{n-j} + P_{n} - Q_{n-s_1}=\sum_{i=0}^{s_2-1} P_{n+i} - \sum_{j=1}^{s_1} Q_{n-j}=K_n.
\end{aligned}$

The proof for the second statement is similar.
\end{proof}

Considering the three conservation laws we obtain by pre-multiplying \eqref{CONS} with $\left( A , B , 0, 1 \right)$, and Lemma \ref{lem} to construct integrals for the mapping (\ref{mg}, the first conservation law yields a constant (a function of $B$ only), the second gives an expression which depends on $x_7$ and $x_8$ so we rewrite the invariant function in terms of the initial values (and we have subtracted a constant), to get
\begin{equation} \label{fi}
 \left(A^{2}+B\right)\frac{x_{1} x_{6}}{x_{3} x_{4}}+AB\left(\frac{ x_{1} x_{5} }{x_{2} x_{4}}+\frac{ x_{2} x_{6}}{x_{3} x_{5}}\right)+A \frac{x_{1} x_{6}}{x_{3} x_{4}}\left(\frac{x_{5} x_{1}}{x_{2} x_{4}}+\frac{x_{2} x_{6}}{x_{3} x_{5}} \right)
+\left( \frac{x_{1} x_{6}}{x_{3} x_{4}}\right)^{2}.
\end{equation}
The third conservation law directly yields
\begin{equation} \label{si}
\frac{B x_{3} x_{4}+ x_{6} x_{1}}{x_{2} x_{5}}-\frac{A x_{2} x_{4}+ x_{1} x_{5}}{x_{3}^{2}}-\frac{A x_{3} x_{5}+ x_{2} x_{6}}{x_{4}^{2}}.
\end{equation}
In terms of variables $x_i=y_iy_{i+1}^{-2}y_{i+2}$ the integral (\ref{fi}) is polynomial
\[
\left(A^{2}+B\right) y_{1} y_{2}^{2} y_{3}^{2} y_{4} +AB y_{2} y_{3} \left( y_{1} + y_{4} \right)+A y_{1}y_{2}^{3}y_{3}^{3}y_{4}\left( y_{1}   +y_{4} \right) + \left(y_{1}y_{2}^{2} y_{3}^{2} y_{4}\right)^{2}
\]
and (\ref{si}) becomes
\[
 A\left(y_{2}+y_{3}\right) -\frac{B}{y_{2} y_{3}}+y_{2} y_{3}\left(y_{1} y_{2}- y_{1}  y_{4}+ y_{3} y_{4}\right),
\]
whereas the mapping (\ref{mg}) reduces to
\begin{equation} \label{del}
\left(y_1,y_2,y_3,y_4\right)\mapsto \left(y_2,y_3,y_4,-\frac{ y_{1} y_{2}^{2} y_{3}^{2} y_{4}+A y_{2} y_{3} y_{4}+B }{ y_{2} y_{3}^{2} y_{4}^{2}}\right).
\end{equation}
This mapping has quadratic growth, but it does not possess the Laurent property. In fact, the mapping (\ref{mg}) is the Laurentification of (\ref{del}), cf. \cite{HK,HHVQ}.

\section{Periodic reduction of a higher Gel’fand-Dikii multilinear system}
How to algorithmically perform periodic reduction for systems of lattice equations in general is an open problem, cf. \cite{IVPs,SNV}. We provide the solution to this problem for the higher Gel’fand-Dikii (GD) multilinear system (\ref{eq:vfpsi}), which is defined on the stencils given in Figure \ref{stcs}.

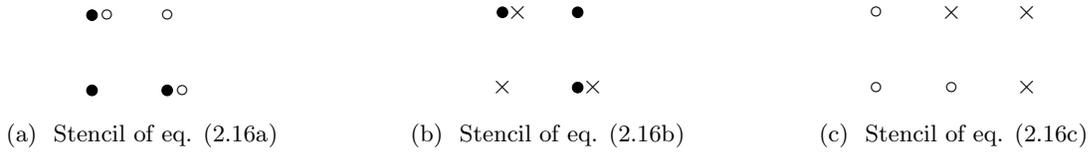
\begin{figure}[h!]
\begin{subfigure}{0.3\textwidth}
\begin{center}
\begin{tikzpicture} 
\filldraw [black] (0,0) circle (2pt)
(1,0) circle (2pt)
(0,1) circle (2pt);
\draw [black](1.2,0) node {$\circ$};
\draw [black](.2,1) node {$\circ$};
\draw [black](1,1) node {$\circ$};
\end{tikzpicture}
\caption{\label{st1} Stencil of eq. \eqref{eq:vfpsia}}
\end{center}
\end{subfigure}
\begin{subfigure}{0.3\textwidth}
\begin{center}
\begin{tikzpicture} 
\filldraw [black] (1,1) circle (2pt)
(1,0) circle (2pt)
(0,1) circle (2pt);
\draw [black](1.2,0) node {$\times$};
\draw [black](.2,1) node {$\times$};
\draw [black](0,0) node {$\times$};
\end{tikzpicture}
\caption{\label{st2} Stencil of eq. \eqref{eq:vfpsib}}
\end{center}
\end{subfigure}
\begin{subfigure}{0.3\textwidth}
\begin{center}
\begin{tikzpicture} 
\draw [black] (0,0) node {$\circ$}
(1,0) node {$\circ$}
(0,1) node {$\circ$};
\draw [black](2,0) node {$\times$};
\draw [black](1,1) node {$\times$};
\draw [black](2,1) node {$\times$};
\end{tikzpicture}
\caption{\label{st3} Stencil of eq. \eqref{eq:vfpsic}}
\end{center}
\end{subfigure}
\caption{\label{stcs} Stencils of the higher GD system, $\tau,\varphi,\psi$ are represented by $\bullet,\circ,\times$.}
\end{figure}

\begin{proposition} \label{wpivp}
A well-posed initial value (Cauchy) problem (IVP) with direction $\b{s}>\b{0}$ can be obtained by squeezing
the combined stencil in Figure \ref{csa}. For directions $\b{s}$ with $s_1>0,s_2<0$, a well-posed IVP can be obtained by squeezing
the combined stencil in Figure \ref{csb}.

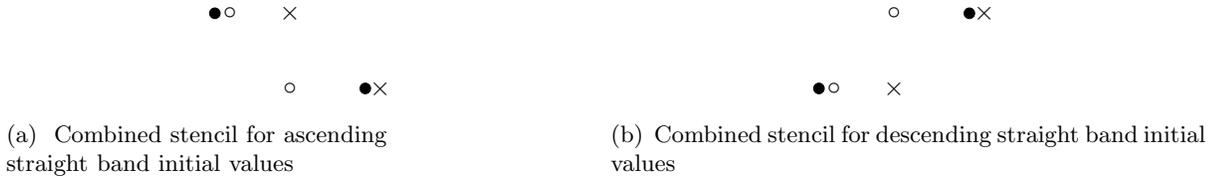
\begin{figure}[h!]
\begin{subfigure}{0.45\textwidth}
\begin{center}
\begin{tikzpicture} 
\filldraw [black] (0,1) circle (2pt)
                (2,0) circle (2pt);
\draw [black]   (0.2,1) node {$\circ$}
                (1,0) node {$\circ$};
\draw [black]   (1,1) node {$\times$}
                (2.2,0) node {$\times$};
\end{tikzpicture}
\caption{\label{csa} Combined stencil for ascending\\ straight band initial values}
\end{center}
\end{subfigure}
\begin{subfigure}{0.45\textwidth}
\begin{center}
\begin{tikzpicture} 
\filldraw [black] (0,0) circle (2pt)
                (2,1) circle (2pt);
\draw [black]   (0.2,0) node {$\circ$}
                (1,1) node {$\circ$};
\draw [black]   (1,0) node {$\times$}
                (2.2,1) node {$\times$};
\end{tikzpicture}
\caption{\label{csb} Combined stencil for descending straight band initial values}
\end{center}
\end{subfigure}
\caption{\label{cs} Combined stencils which can be used to construct Cauchy problems for the higher GD system}
\end{figure}
\end{proposition}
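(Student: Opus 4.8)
The plan is to verify that the two combined stencils in Figure~\ref{cs} are genuinely squeezable in the required direction and that squeezing them, together with the algorithm of \cite{IVPs}, produces a mapping that is \emph{well-posed}, i.e.\ every $\tau$-, $\vf$- and $\psi$-value on the band is determined by the band data and the update is invertible. First I would make precise what ``combined stencil'' means here: one superimposes the three stencils of Figure~\ref{stcs} after translating each equation so that its ``output'' corner (the corner one solves for) sits at a common reference point, and then one records the convex hull of the union. For the ascending case $\b{s}>\b{0}$ the relevant edge directions of that convex hull must all differ from $\b{\hat s}$, so that squeezing by two parallel lines of direction $\b{s}$ works; I would check this by computing the convex hull of the six marked points in Figure~\ref{csa} (and likewise Figure~\ref{csb} for $s_1>0$, $s_2<0$) and confirming its edge slopes are $0$, $\infty$, and $\pm1$ only, none of which is a generic $\b{s}$.

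The key steps, in order, are: (i) fix the variable $n(l,m)=s_2 l - s_1 m$ (resp.\ $|s_2|l + s_1 m$ in the fourth-quadrant case), as in Section~4, so that a ``straight band'' is a set $\{n_0\le n < n_0+w\}$ of consecutive level sets, with width $w$ equal to the lattice distance across the combined stencil measured in the $\b{c}$-direction; (ii) show that with this $w$ the band carries exactly enough data --- one $\tau$, one $\vf$, one $\psi$ value per lattice point, minus the ``boundary'' redundancies --- to evolve; (iii) check the \emph{evolution direction}: stepping by $\b{c}$ (where $D(\b{s},\b{c})=1$), equation \eqref{eq:vfpsia} determines the new $\vf$ from $\tau$-values already in the band, equation \eqref{eq:vfpsib} determines the new $\psi$ from $\tau$-values already in the band, and then --- crucially --- equation \eqref{eq:vfpsic} determines the new $\tau$ from the $\vf$- and $\psi$-values including the ones just computed; (iv) check invertibility by running the same three equations in the $-\b{c}$ direction, solving \eqref{eq:vfpsia}--\eqref{eq:vfpsic} for the opposite corners, which is possible since each equation is (multi-)linear and hence rationally solvable for any single chosen field at a corner. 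Steps (iii) and (iv) are exactly where the asymmetric choice of combined stencil matters: the three stencils are ``chained'' --- $\vf$ feeds \eqref{eq:vfpsic}, $\psi$ feeds \eqref{eq:vfpsic}, $\tau$ feeds \eqref{eq:vfpsia} and \eqref{eq:vfpsib} --- and one must order the updates to respect this chain, which is precisely why the combined stencil has the staggered shape drawn in Figure~\ref{cs} rather than a symmetric one.

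The main obstacle I expect is step~(ii)/(iii) together: proving that the update is \emph{deterministic and complete}, i.e.\ that after one $\b{c}$-shift every newly exposed lattice point has all three of its fields pinned down, with no field left undetermined and none over-determined. Because the three equations live on three different (overlapping) stencils, a naive count of ``unknowns vs.\ equations'' per lattice point is not enough; one has to track how the three stencils interlock along the band boundary. Concretely, the subtlety is that the output corner of \eqref{eq:vfpsic} for $\tau$ at a given point requires $\vf$ and $\psi$ at points that are themselves outputs of \eqref{eq:vfpsia},\eqref{eq:vfpsib} at \emph{earlier} points of the same sweep, so one must exhibit an explicit linear order on the lattice points of one $\b{c}$-layer in which the three equations can be applied without circular dependence. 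I would handle this by drawing the $\b{c}$-translate of the combined stencil over a strip and literally exhibiting that order (it will be: first all new $\vf$'s via \eqref{eq:vfpsia}, then all new $\psi$'s via \eqref{eq:vfpsib}, then all new $\tau$'s via \eqref{eq:vfpsic}), and then reading off that the band width $w$ dictated by the convex hull of Figure~\ref{cs} is exactly the width for which this works --- neither narrower (some dependency would reach outside the band) nor wider (some band value would be redundant). The fourth-quadrant case of Figure~\ref{csb} is then the mirror argument with $n(l,m)=|s_2|l+s_1m$ and the reduced conservation-law bookkeeping \eqref{rcl2} in place of \eqref{rcl}; once the ascending case is done, it requires only relabeling.
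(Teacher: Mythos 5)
Your overall strategy is the same as the paper's: squeeze the combined stencil by lines of direction $\b{s}$ to get one band of initial data per field, then prove well-posedness by exhibiting an explicit order in which the three equations determine the boundary values of the three fields, in both the forward and backward direction. However, the dependency chain you write down in step (iii) is wrong, and this is exactly the non-trivial content of the proposition. You claim that \eqref{eq:vfpsib} gives the new $\psi$ from $\tau$-values already in the band and that \eqref{eq:vfpsic} then gives the new $\tau$. But \eqref{eq:vfpsic} contains no $\tau$ at all --- it couples only $\vf$ and $\psi$ --- so it can never determine a $\tau$-value. Moreover, with the band offsets fixed by Figure \ref{csa}, solving \eqref{eq:vfpsib} for the new $\psi$ requires the doubly shifted $\wh{\wt{\tau}}$, i.e.\ precisely the new $\tau$ that has not yet been computed (this is visible in the explicit $(2,1)$-reduction: in \eqref{eB} the unknown $z_4$ sits alongside $x_4$), so your proposed order ``(a)$\to\vf$, (b)$\to\psi$, (c)$\to\tau$'' is circular. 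The correct chain for the ascending stencil, evolving to the right, is: \eqref{eq:vfpsia} for $\vf$, then \eqref{eq:vfpsic} for $\psi$ (using the freshly computed $\vf$), then \eqref{eq:vfpsib} for $\tau$ (using the freshly computed $\psi$) --- exactly the order (\ref{eA}), (\ref{eC}), (\ref{eB}) used for the $(2,1)$-reduction.

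A second, related gap: your step (iv) treats invertibility as ``the mirror of (iii), solving the same equations at opposite corners,'' but which field each equation determines changes with the direction of evolution (for Figure \ref{csa} the left-boundary order is \eqref{eq:vfpsib} for $\psi$, \eqref{eq:vfpsic} for $\vf$, \eqref{eq:vfpsia} for $\tau$), and it changes again for the descending stencil of Figure \ref{csb}. The proposition is precisely the statement that for these two staggered combined stencils all four such chains (right and left, ascending and descending) close up without reaching outside the bands; a proof must exhibit each of them, not only the forward ascending one. Your auxiliary considerations (convex-hull edge directions, dimension count via $D(\b{s},\b{d})$ per band) are fine, but without the correct interlocking orders the argument does not establish well-posedness.
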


\begin{proof}
As in the scalar case, to obtain straight band initial values, the combined stencil is squeezed by lines with direction $\b{s}$ in the allowed region. Here we obtain three different bands, in which the initial values are given, one band for each variable. We need to show, for each combined stencil, that one can determine the value of each variable on the boundaries of its band, using one of the equations, in each direction. We start with the combined stencil  in \ref{csa}, solving for the variables on the boundaries on the right.  In order, one first solves eq. (\ref{eq:vfpsia}) for $\varphi$, then eq. (\ref{eq:vfpsic}) for $\psi$, and finally eq. (\ref{eq:vfpsib}) for $\tau$. To find the variables on the left boundaries, we first solve eq. (\ref{eq:vfpsib}) for $\psi$, then eq. (\ref{eq:vfpsic}) for $\varphi$, and then eq. (\ref{eq:vfpsia}) for $\tau$.  Next, we consider the combined stencil in \ref{csb} for descending bands. To determine the values of the variables on the boundaries on the right, one first solves eq. (\ref{eq:vfpsia}) for $\varphi$, then eq. (\ref{eq:vfpsib}) for $\tau$, and finally eq. (\ref{eq:vfpsic}) for $\psi$. To find the values on the left boundaries, we first solve eq. (\ref{eq:vfpsia}) for $\tau$, then eq. (\ref{eq:vfpsib}) for $\psi$, and then eq. (\ref{eq:vfpsic}) for $\varphi$.   \end{proof}

Let us use Proposition \ref{wpivp} to construct a well-posed IVP for the higher GD system (\ref{eq:vfpsi}). In Figure \ref{scsa} we squeeze the combined stencil with lines that have direction $\b{s}=(2,1)$.

\begin{figure}[h!]
\begin{subfigure}{0.45\textwidth}
\begin{center}
\begin{tikzpicture}[scale=1.5]  
\foreach \x in {-1,...,3}{
      \foreach \y in {-1,...,2}{
        \node[draw,circle,inner sep=0pt,fill] at (\x,\y) {};
            % Places a dot at those points
      }
    }
\filldraw [white] (1,0) circle (1.4pt);
\filldraw [black] (0,1) circle (1.4pt)
                (2,0) circle (1.4pt);
%\draw [black]   (0.1,1) node {$\circ$}
%                (1,0) node {$\circ$};
\draw [black]   (1,1) node {$\times$}
                (2.1,0) node {$\times$};
\draw   (-1,1/2)--(2,2)
        (0,-1)--(3,1/2);
\draw[dashed]   (-.9,1/2)--(2.1,2)
        (-1,-1)--(3,1);
\draw[dotted]   (-1,0)--(3,2)
        (.1,-1)--(3,.45);
\filldraw [white] (0.1,1) circle (1.4pt)
            (1,0) circle (1.4pt);
\draw [black] (0.1,1) circle (1.4pt)
            (1,0) circle (1.4pt);      
\end{tikzpicture}
\caption{\label{scsa} Squeezing the combined stencil from Figure \ref{csa} \vspace{1.15cm}}
\end{center}
\end{subfigure}
\begin{subfigure}{0.45\textwidth}
\begin{center}
\begin{tikzpicture}[scale=1.5] 
\foreach \x in {-1,...,3}{
      \foreach \y in {-1,...,2}{
        \node[draw,circle,inner sep=0pt,fill] at (\x,\y) {};
            % Places a dot at those points
      }
    }
\filldraw [white] (1,0) circle (1.4pt);
\filldraw [black] (0,1) circle (1.4pt)
                (2,0) circle (1.4pt)
                (0.9,1) circle (1.4pt)
                (-1.1,0) circle (1.4pt)
                (-.1,0) circle (1.4pt)
                (0,-1) circle (1.4pt)
                (-1.1,-1) circle (1.4pt)
                (.9,0) circle (1.4pt)
                (2.9,2) circle (1.4pt)
                (2.9,1) circle (1.4pt)
                (1.9,1) circle (1.4pt)
                (2,2) circle (1.4pt)
                (2,2) circle (1.4pt)
                ;
\draw [black]   (1.1,1) node {$\times$}
                (1.1,0) node {$\times$}
                (-.9,0) node {$\times$}
                (3.1,2) node {$\times$}
                (0.1,0) node {$\times$}
                (2.1,1) node {$\times$}
                (-.9,-1) node {$\times$}
                (3.1,1) node {$\times$}
                (2.1,0) node {$\times$}
                (0.1,-1) node {$\times$};
\draw [black]   (-1,-1.15) node {?}
                (1,-.15) node {?}
                (3,.85) node {?}
                (0,-1.15) node {?}
                (2,-.15) node {?}
                (0.1,-1.15) node {?}
                (2.1,-.15) node {?}
                ;
\draw [black]   (0,1.2) node {0}
                (2,2.2) node {0}
                (1,1.2) node {1}
                (3,2.2) node {1}
                (-1,0.2) node {1}
                (2,1.2) node {2}
                (0,0.2) node {2}
                (1,.2) node {3}
                (-1,-.8) node {3}
                (3,1.2) node {3}
                (2,.2) node {4}
                (0,-.8) node {4}
                ;
\draw   (-1,1/2)--(2,2)
        (0,-1)--(3,1/2);
\draw[dashed]   (-.9,1/2)--(2.1,2)
        (-1,-1)--(3,1);
\draw[dotted]   (-1,0)--(3,2)
        (.1,-1)--(3,.45);
\filldraw [white] (0.1,1) circle (1.4pt)
            (1,0) circle (1.4pt)
            (2.1,2) circle (1.4pt)
            (-1,0) circle (1.4pt)
            (1,1) circle (1.4pt)
            (3,2) circle (1.4pt)
            (0,0) circle (1.4pt)
            (2,1) circle (1.4pt)
            (-1,-1) circle (1.4pt)
            (3,1) circle (1.4pt);
\draw [black] (0.1,1) circle (1.4pt)
            (1,0) circle (1.4pt)
            (2.1,2) circle (1.4pt)
            (-1,0) circle (1.4pt)
            (1,1) circle (1.4pt)
            (3,2) circle (1.4pt)
            (0,0) circle (1.4pt)
            (2,1) circle (1.4pt)
            (-1,-1) circle (1.4pt)
            (3,1) circle (1.4pt);
\end{tikzpicture}
\caption{\label{scsb} Labelling the initial values, for evolution to the right. Values on the same line have the same label. Values with a question mark are initial values, they need to be calculated.}
\end{center}
\end{subfigure}
\end{figure}

The initial value configuration in Figure \ref{scsb} leads to the following 10-dimensional mapping
\begin{equation} \label{map}
\left(x_0,x_1,x_2,x_3,y_0,y_1,y_2,z_1,z_2,z_3\right)
\mapsto
\left(x_1,x_2,x_3,x_4,y_1,y_2,y_3,z_2,z_3,z_4\right), \end{equation}
where $y_3$ is determined by taking $n=0$ in
\begin{subequations} 
\begin{equation} \label{eA}
a x_{n+2} y_{n+1}+b x_{n} y_{n+3}=\left(a +b \right) x_{n+3} y_{n},
\end{equation}
after which $z_4$ can be found from, with $n=0$, 
\begin{equation} \label{eC}
a y_{n+2} z_{n+2}+d y_{n} z_{n+4}=\left(a +d \right) y_{n+3} z_{n+1},
\end{equation}
and then $x_4$ is the solution of,  with $n=0$,
\begin{equation} \label{eB}
a x_{n+2} z_{n+3}+c x_{n+4} z_{n+1}=\left(a +c \right) x_{n+1} z_{n+4}.
\end{equation}
\end{subequations}
Here $x,y,z$ denote the reductions of $\tau,\varphi,\psi$, and
\[
a=p-q, b=q-\omega_1(p), c=q-\omega_2(p), d=q-\omega_3(p).
\]
The inverse of the map \eqref{map} is given by $\left(x_0,x_1,x_2,x_3,y_0,y_1,y_2,z_1,z_2,z_3\right)
\mapsto
\left(x_{-1},x_0,x_1,x_2,y_{-1},y_1,y_2,z_0,z_1,z_2\right)$, where we obtain $z_0$ from \eqref{eB} ($n=-1$), $y_{-1}$ from \eqref{eC} ($n=-1$), and $x_{-1}$ from \eqref{eA} ($n=-1$), in that order.  
\begin{proposition}
The mapping (\ref{map}) has the Laurent property.
\end{proposition}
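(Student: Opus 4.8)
The plan is to proceed exactly as in the proof of Proposition~\ref{lp1}: establish the Laurent phenomenon on a finite initial segment of the orbit and then invoke \cite[Theorem~2]{HHVQ}. First I would take generic initial data $(x_0,x_1,x_2,x_3,y_0,y_1,y_2,z_1,z_2,z_3)$ and iterate \eqref{map} symbolically, at each step solving \eqref{eA} for the new $y$, then \eqref{eC} for the new $z$, then \eqref{eB} for the new $x$, in exactly that triangular order (and, going backwards, using the explicit inverse recorded above, solving \eqref{eB} for $z$, then \eqref{eC} for $y$, then \eqref{eA} for $x$). The one-step computation already fixes the pattern forwards: $y_3=((a+b)x_3y_0-ax_2y_1)/(bx_0)$, then $z_4$ with reduced denominator proportional to $x_0y_0$, then $x_4$ with reduced denominator proportional to $x_0y_0z_1$; and the mirror computation for the inverse produces $z_0$, $y_{-1}$, $x_{-1}$ with denominators built from the newest slots $x_3$, $z_3$, $y_2$. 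I expect that along the whole orbit every component is a reduced fraction whose denominator is a monomial in the ten initial variables, each solve expressing a new variable as a Laurent polynomial in the previously computed ones divided by exactly one further factor ($x_k$ for the new $y$, $y_k$ for the new $z$, $z_{k+1}$ for the new $x$, and the mirror image for the inverse).

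Next I would tabulate these monomial denominators over a sufficiently long initial segment --- in the scalar case the iterates $p_6,\ldots,p_{12}$ sufficed, so here a comparable number of shifts of \eqref{map}, each producing a new triple $x,y,z$ --- and verify the remaining hypotheses of \cite[Theorem~2]{HHVQ}: essentially that the numerators produced at the first step(s) are coprime to the numerators produced at all subsequent steps, both forwards and backwards. The triangular solve order is what makes this tractable: because each new variable is obtained from the previous ones by dividing a Laurent polynomial by a single monomial, the claim reduces by an elementary induction to the statement that the extra factor ($x_k$, $y_k$ or $z_{k+1}$) never cancels against the numerator and no new common factor is ever created, which is precisely the finite check encoded in \cite[Theorem~2]{HHVQ}. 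Once those finitely many monomiality and coprimality checks pass, the theorem gives that every iterate of \eqref{map} is a Laurent polynomial in the initial variables, which is the assertion.

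Two observations keep the computation within reach: dividing each of \eqref{eA}, \eqref{eC}, \eqref{eB} by $a$ leaves only the three essential parameters $b/a$, $c/a$, $d/a$; and the mirror symmetry between the forward and backward solve orders means the bookkeeping really need be carried out only once. The main obstacle is not conceptual but the sheer size of the expressions: with three coupled recurrences of degrees $4,3,3$ in ten variables the numerators grow fast, and --- unlike the single scalar sequence of Proposition~\ref{lp1} --- the coprimality verification must be organised across three interleaved sequences, so the delicate point is to confirm that no spurious common factor ever appears between an early numerator and a later one (equivalently, that the denominators stabilise to the expected monomial form all along the orbit) and not merely within an individual one-step reduction.
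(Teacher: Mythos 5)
Your plan coincides with the paper's proof: the authors iterate the (index-shifted) map symbolically, record that the denominators $q^x_i,q^y_i,q^z_i$ for $i=4,\ldots,8$ are monomials in the initial values, verify with MAPLE that the numerators of the fourth iterates are coprime to those of iterates $5,\ldots,8$, and then invoke \cite[Theorem 2]{HHVQ} exactly as you propose. The only difference is that the paper actually carries out the finite symbolic computation you outline (and only needs the forward coprimality checks), so your proposal is correct in approach and merely awaits that explicit verification.
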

\begin{proof}
To aid the proof we shift the index on the $y$-variable, so that the map becomes
\[
\mathfrak{m}:\left(x_0,x_1,x_2,x_3,y_1,y_2,y_3,z_1,z_2,z_3\right)
\mapsto
\left(x_1,x_2,x_3,x_4,y_2,y_3,y_4,z_2,z_3,z_4\right),
\]
with $a x_{2} y_{2}+b x_{0} y_{4}=\left(a +b \right) x_{3} y_{1}$, $a x_{2} z_{3}+c x_{4} z_{1}=\left(a +c \right) x_{1} z_{4}$, $a y_{3} z_{2}+d y_{1} z_{4}=\left(a +d \right) y_{4} z_{1}$. We define, for all $i$,
\[
\mathfrak{m}^i\left(x_0,x_1,x_2,x_3,y_1,y_2,y_3,z_1,z_2,z_3\right)=\left(x_i,x_{i+1},x_{i+2},x_{i+3},y_{i+1},y_{i+2},y_{i+3},z_{i+1},z_{i+2},z_{i+3}\right)\]
and $x_i=\dfrac{p^x_i}{q^x_i}$,
$y_i=\dfrac{p^y_i}{q^y_i}$,
$z_i=\dfrac{p^z_i}{q^z_i}$, with $\gcd(p^x_i,q^x_i)=\gcd(p^y_i,q^y_i)=\gcd(p^z_i,q^z_i)=1$. The first few numerators are
\begin{align*}
p^y_4&=-a x_{2} y_{2}+\left(a +b \right) x_{3} y_{1}\\
p^z_4&=-a b x_{0} y_{3} z_{2}-a \left(a +d \right) x_{2} y_{2} z_{1}+\left(a +d \right) \left(a +b \right) x_{3} y_{1} z_{1}\\
p^x_4&=-a b \left(a +c \right) x_{0} x_{1} y_{3} z_{2}-a b d x_{0} x_{2} y_{1} z_{3}-a \left(a +d \right) \left(a +c \right) x_{1} x_{2} y_{2} z_{1}+\left(a +d \right) \left(a +c \right) \left(a +b \right) x_{1} x_{3} y_{1} z_{1}\\
p^y_5&=-a b \left(a +c \right) \left(a +b \right) x_{0} x_{1} y_{2} y_{3} z_{2}-a b d \left(a +b \right) x_{0} x_{2} y_{1} y_{2} z_{3}-a b c d x_{0} x_{3} y_{1} y_{3} z_{1}-a \left(a +d \right) \left(a +c \right) \left(a +b \right) x_{1} x_{2} y_{2}^{2} z_{1}\\
&\ \ \ +\left(a +b \right)^{2} \left(a +d \right) \left(a +c \right) x_{1} x_{3} y_{1} y_{2} z_{1}\\
p^z_{5} &= 
-a b \left(a +d \right) \left(a +c \right) \left(a +b \right) x_{0} x_{1} y_{2} y_{3} z_{2}^{2}-a b d \left(a +d \right) \left(a +b \right) x_{0} x_{2} y_{1} y_{2} z_{2} z_{3}-a b c d \left(a +d \right) x_{0} x_{3} y_{1} y_{3} z_{1} z_{2}\\
&\ \ \ +a^{2} b c d x_{1} x_{2} y_{1} y_{2} z_{1} z_{3}-a \left(a +d \right)^{2} \left(a +c \right) \left(a +b \right) x_{1} x_{2} y_{2}^{2} z_{1} z_{2}-a b c d \left(a +b \right) x_{1} x_{3} y_{1}^{2} z_{1} z_{3} \\
&\ \ \ +\left(a +d \right)^{2} \left(a +b \right)^{2} \left(a +c \right) x_{1} x_{3} y_{1} y_{2} z_{1} z_{2}\\
p^x_{5} &= 
-a b \left(a +c \right)^{2} \left(a +d \right) \left(a +b \right) x_{0} x_{1} x_{2} y_{2} y_{3} z_{2}^{2}+a^{2} b^{2} c d y_{3} x_{0} x_{1} x_{3} y_{2} z_{1} z_{2}-a b d \left(a +d \right) \left(a +c \right) \left(a +b \right) x_{0} x_{2}^{2} y_{1} y_{2} z_{2} z_{3} \\
&\ \ \ -a b c d \left(a +d \right) \left(a +c \right) x_{0} x_{2} x_{3} y_{1} y_{3} z_{1} z_{2}+a^{2} b c d \left(a +c \right) x_{1} x_{2}^{2} y_{1} y_{2} z_{1} z_{3}-a \left(a +d \right)^{2} \left(a +c \right)^{2} \left(a +b \right) x_{1} x_{2}^{2} y_{2}^{2} z_{1} z_{2}\\
&\ \ \ -a b c d \left(a +c \right) \left(a +b \right) x_{1} x_{2} x_{3} y_{1}^{2} z_{1} z_{3}+\left(a +d \right)^{2} \left(a +c \right)^{2} \left(a +b \right)^{2} x_{1} x_{2} x_{3} y_{1} y_{2} z_{1} z_{2}+a^{2} b c d \left(a +d \right) x_{1} x_{2} x_{3} y_{2}^{2} z_{1}^{2}\\
&\ \ \ -a b c d \left(a +d \right) \left(a +b \right) x_{1} x_{3}^{2} y_{1} y_{2} z_{1}^{2}\\
& \ \ \vdots
\end{align*}
The denominators of the first five iterates are:
\begin{align*}
q^y_4&=b x_{0},\ 
q^y_5=b^{2} c d  x_{0} y_{1} z_{1} x_{1},\
q^y_6=b^{3}c^{2}  d^{2} x_{0}^{2} y_{1} z_{1} x_{1} y_{2} z_{2} x_{2},\
q^y_7=b^{5} c^{3} d^{3} x_{0}^{3} y_{1}^{2} z_{1}^{2} x_{1}^{2} y_{2} z_{2} x_{2} y_{3} z_{3} x_{3},\\
q^y_8&=b^{7} c^{5} d^{5} z_{1}^{3} y_{1}^{3} x_{0}^{3}x_{1}^{3} y_{2}^{2} z_{2}^{2} x_{2}^{2} y_{3} z_{3} x_{3},
q^z_4=b d x_{0} y_{1},
q^z_5=b^{2} c d^{2} x_{0} y_{1} z_{1} x_{1} y_{2},
q^z_6=b^{3} c^{2} d^{3} x_{0}^{2} y_{1}^{2} z_{1} x_{1} y_{2} z_{2} x_{2} y_{3},\\
q^z_7&=b^{5} c^{3} d^{5} x_{0}^{3} x_{3} z_{3} y_{3} x_{2} z_{2} y_{2}^{2} x_{1}^{2} z_{1}^{2} y_{1}^{3},
q^z_8=b^{7} c^{5} d^{7} 
x_{1}^{3} z_{1}^{3} y_{1}^{3} x_{0}^{4} y_{2}^{3} z_{2}^{2} x_{2}^{2} y_{3}^{2} z_{3} x_{3},
q^x_4=b c d x_{0} y_{1} z_{1},
q^x_5=b^{2} c^{2} d^{2} x_{0} y_{1} z_{1} x_{1} y_{2} z_{2},\\
q^x_6&=b^{3} c^{3} d^{3} x_{0}^{2} y_{1}^{2} z_{1}^{2} x_{1} y_{2} z_{2} x_{2} y_{3} z_{3},
q^x_7=b^{5}  c^{5} d^{5} y_{1}^{3} x_{0}^{3} x_{1}^{2} y_{2}^{2} z_{2}^{2} x_{2} y_{3} z_{3} x_{3} z_{1}^{3},
q^x_8=b^{7} c^{7} d^{7} y_{2}^{3} x_{1}^{3} z_{1}^{3} y_{1}^{4} x_{0}^{4} z_{2}^{3} x_{2}^{2} y_{3}^{2} z_{3}^{2} x_{3}. 
\end{align*}
We have verified, using MAPLE \cite{MPL}, that, for $i=5,\ldots,8$, $\gcd(x^p_4,x^p_i)=\gcd(x^p_4,y^p_i)=\gcd(x^p_4,z^p_i)=\gcd(y^p_4,x^p_i)=\gcd(y^p_4,y^p_i)=\gcd(y^p_4,z^p_i)=\gcd(z^p_4,x^p_i)=\gcd(z^p_4,y^p_i)=\gcd(z^p_4,z^p_i)=1$. From this, and the fact that the denominators $x^q_i,y^q_i,z^q_i$, $i=4,\ldots,8$ are monomial, by \cite[Theorem 2]{HHVQ}, for all $i$, the functions $x_i,y_i,z_i$ are Laurent polynomials in the initial values.
\end{proof}

Notice that the denominators do not depend on $a$. Hence, for any $a\in\mathbb{Z}$ and with all initial values equal to $\pm1$, the sequences $x,y,z$ are integer sequences. Consider the map \eqref{map} and its inverse (which also possesess the Laurent property). Taking $a=-2$, $z_1=-1$, and setting all other initial values to $1$, the values for $x_n,y_n,z_n$ with $-6\leq n \leq 12$ are respectively
\begin{align*}
67177, 4209, -255, 137, -15, 5,\ &1, 1, 1, 1, 1, 5, -7, 49, -767, 2905, -5519, 1021381, 22876241\\
22417, 169, 129, -47, 9, -3,\ &1, 1, 1, 1, 1, -3, 17, -7, -127, 5521, -15847, -89667, -54620095\\
-74567, 1079, -631, 73, -5, -7,\ &1, -1, 1, 1, 3, 1, 9, -57, 313, -647, 13835, 312009, 3909457.
\end{align*}
This provides an example of a set of three integer sequences generated by a coupled system of bilinear recurrences. These sequences satisfy many more relations. For example, from \eqref{eq:vfvf}, we get
\begin{align*}
&a^{2} \left(x_{n -2} x_{n -1} y_{n} y_{n -3}-x_{n} x_{n -3} y_{n -2} y_{n -1}\right)+\left(a +d \right) \left(a +c \right) \left(x_{n} x_{n -4} y_{n} y_{n -2}-x_{n -3} x_{n -1} y_{n -3} y_{n +1}\right)\\
&+c d \left(x_{n -3} x_{n +2} y_{n -3} y_{n -2}-x_{n} x_{n -1} y_{n} y_{n -5}\right)=0,
\end{align*}
and \eqref{eq:hexaGD} yields
\begin{align*}
&a^{3} \left(x_{n}^{2} x_{n +1}^{2} x_{n -2} x_{n -3}-x_{n -1}^{2} x_{n +2} x_{n +1} x_{n -2}^{2}\right)-\left(a +d \right) \left(a +c \right) \left(a +b \right) x_{n +1} x_{n -2} \left(x_{n +2} x_{n}^{2} x_{n -4}-x_{n -1}^{2} x_{n +3} x_{n -3}\right)\\
&-b c d \left(x_{n -1} x_{n -2} x_{n -3} \left(x_{n} x_{n +4} x_{n -1}+x_{n +3} x_{n -2} x_{n +2}\right)-x_{n +2} x_{n +1} x_{n} \left(x_{n} x_{n -5} x_{n -1}+x_{n -3} x_{n +1} x_{n -4}\right)\right)=0,
\end{align*}
which also holds for the $y$ and $z$ sequences.
\section*{Acknowledgement}
The paper was initiated while FWN was an affiliate of the Sydney Mathematical Research Institute (SMRI), and he is grateful for its support. He also would like to thank the Department of Mathematical and Physical Sciences of La Trobe University for its hospitality during this visit. He was supported by EPSRC grant EP/007290/1 when most of the work was done. 
We are also grateful to Michael Somos for pointing out that the sequence \eqref{eq:Somseq} can 
be obtained from the bilinear recursion \eqref{eq:Somrecurs}.

\end{document}